\newtheorem{theorem}{Theorem}[section]
\newtheorem{example}{Example}[section]
\newtheorem{remark}{Remark}[section]
\begin{document}
\title{\bf A novel distribution with upside down bathtub shape hazard rate: properties, estimation and applications}

\author{ Tuhin Subhra {\bf Mahato}\thanks {Email address : tuhinsubhra.mahato@gmail.com},~ Subhankar {\bf Dutta}\thanks {Corresponding author : subhankar.dta@gmail.com},~ and Suchandan {\bf Kayal}\thanks{Email address : kayals@nitrkl.ac.in,~suchandan.kayal@gmail.com}}

\date{}
\maketitle 
\noindent {\it Department of Mathematics, National Institute of
	Technology Rourkela, Rourkela-769008, Odisha, India} \\
{}
%{\it $^{3}$ Applied Statistics Unit, Indian Statistical Institute, Kolkata-700108, India.}
\begin{center}Abstract
\end{center}
In this communication, we introduce a new statistical model and study its various mathematical properties. The expressions for hazard rate, reversed hazard rate, and odd functions are provided. We explore the asymptotic behaviours of the density and hazard functions of the newly proposed model. Further,  moments, median, quantile, and mode are obtained. The cumulative distribution  and density functions of the general $k$th order statistic are provided. Sufficient conditions, under which the likelihood ratio order between two inverse generalized linear failure rate (IGLFR) distributed random variables holds, are derived. In addition to these results, we introduce several estimates for the parameters of IGLFR distribution. The maximum likelihood and maximum product spacings estimates are proposed. Bayes estimates are calculated with respect to the squared error loss function. Further, asymptotic confidence and Bayesian credible intervals are obtained. To observe performance of the proposed estimates, we carry out a Monte Carlo simulation using $R$ software. Finally, two real life data sets are considered for the purpose of illustration.  
\\
\\
\noindent{\bf Keywords:} IGLFR distribution; Order statistics; Stochastic orderings; Maximum likelihood estimate; Bayes estimate; Mean squared error.

\section{Introduction}
The statistical literature on lifetime models has been enriched with several continuous distributions and  it is still developing. Many distributions, which have been useful in various real-life situations are introduced in past two decades. For example, \cite{abouammoh2009reliability} proposed a new distribution named as generalized inverted exponential (GIE) distribution and estimated its reliability parameter using various estimation techniques. Further, \cite{de2011generalized} introduced and studied a three-parameter generalized inverse Weibull distribution. The authors observed that the failure rate of this distribution is decreasing and unimodal.
\cite{tahir2018inverted} proposed  an inverted model called the inverted Nadarajah-Haghighi distribution. Some properties of this distribution and estimates of  model parameters have been explored by the authors. \cite{basheer2019alpha} introduced a new generalized alpha power inverse Weibull distribution. The author has used alpha power transformation method in order to yield this distribution. Various characterization results and statistical properties of the alpha power inverse Weibull distribution have been proposed. Finally, the author employed several estimation techniques for estimation of the unknown parameters of alpha power inverse Weibull distribution. 
\cite{hassan2019inverse} proposed a new three-parameter lifetime distribution named as the inverse power Lomax distribution. This distribution can be obtained using inverse transformation from power Lomax distribution. 
\cite{rao2019exponentiated} introduced a generalization of the inverse Rayleigh distribution known as exponentiated inverse Rayleigh distribution. Some statistical properties of the newly proposed distribution have been investigated. In particular, the mode, quantiles, moments, reliability, and hazard functions have been obtained. Various estimates for the model parameters have been proposed.

In statistical literature, inverted distributions have been generated by using the inverse transformation to display different properties of the probability density function and hazard rate functions and also allowed to explore more applicability in real-life phenomenon. The inverted distributions are applied in several areas related to econometrics, biological sciences, medical research, and survival analysis. To know more details about inverted distributions, one may be referred to \cite{sheikh1987some} and \cite{lehmann2012inverted}. There are some well-known inverted distributions and these are listed below:
\begin{itemize}
	\item Inverted Weibull distribution with cumulative distribution function (CDF): $F_{IW}(x)=\exp(-\alpha x^{-\theta}),~x,\alpha,\theta>0$. (For instance, see \cite{keller1982alternate})
	\item Inverted Rayleigh distribution with CDF: $F_{IR}(x)=\exp(-\alpha x^{-2}),~x,\alpha>0$. (For instance, see \cite{voda1972inverse}) 
	\item Inverted Gamma distribution with CDF: $F_{IG}(x)= \Gamma(\alpha,\theta x^{-1})\Gamma(\alpha),~x,\alpha,\theta>0$. (For instance, see \cite{lin1989inverted})
	\item Inverted power Lindley distribution with CDF: $F_{IPL}(x)=(1+\frac{\lambda x^{-\alpha}}{1+\lambda}) exp(-\lambda x^{-\alpha}),~x,\alpha,\lambda>0$. (For instance, see \cite{barco2017inverse}) 
\end{itemize} The purpose of this paper is two folds. First, we introduce a new distribution and study its various statistical properties. Secondly, estimators for the unknown parameters of IGLFR distribution have been obtained from both frequentist and Bayesian approaches and  establish a recommendation for choosing the most appropriate estimator, which we believe would be of significant interest to applied statisticians. The maximum likelihood estimates (MLEs) and maximum product spacing estimates (MPSEs) have been considered as frequentist estimates. Furthermore, the Bayes estimates of the unknown parameters have been computed under the assumptions of independent gamma priors based on squared error loss function (SELF). A Monte Carlo simulation study has been carried out to compare the performance of the estimates. Finally, two real life data sets have been analyzed for illustrative purpose.

The article is organized as follows. In Section $2$, we first provide construction of the newly proposed IGLFR distribution. Some special cases are presented. Then, several statistical properties are explored. In Section $3$, we present estimation techniques for the computation of the unknown parameters. In Section $4$,  simulation study has been carried out to see  comparative performance of the proposed estimates. Section $5$ presents real life data analysis. Finally, some concluding remarks are added in Section $6$.

\section{The proposed IGLFR distribution}
In this section, we introduce a new continuous distribution referred to as the IGLFR distribution and study its properties.
The new three-parameter distribution is derived using an inverse transformation to the generalized linear failure rate (GLFR) distribution proposed by \cite{sarhan2009generalized}. The cumulative distribution function (CDF) of a GLFR distributed random variable $Y$ is given by 
\begin{eqnarray}\label{eq2.1}
G(y;\alpha,\beta,\theta)=\left[1-\exp\left\{-\left(\alpha y+\frac{\beta y^2}{2}\right)\right\}\right]^{\theta},~y>0,~\alpha,\beta,\theta>0.
\end{eqnarray}
Note that in (\ref{eq2.1}), $\theta$ is known as the shape parameter. The probability density function (PDF) of  GLFR distribution is
\begin{eqnarray}
g(y;\alpha,\beta,\theta)=\theta(\alpha+\beta y)\left[1-\exp\left\{-\left(\alpha y+\frac{\beta y^2}{2}\right)\right\}\right]^{\theta-1} \exp\left\{-\left(\alpha y+\frac{\beta y^2}{2}\right)\right\},~y>0.
\end{eqnarray}

\subsection{Definition}
Let $X=\frac{1}{Y}$ be a transformation proposed by \cite{tahir2018inverted} and here $Y$ follows GLFR distribution with CDF (\ref{eq2.1}). Then, the CDF of $X$ is 
\begin{eqnarray}\label{eq2.3}
F(x)&=&P(X\le x)=P\left(\frac{1}{Y}\le x\right)=P\left(Y\ge \frac{1}{x}\right)=1-P\left(Y< \frac{1}{x}\right)\nonumber\\&=&1-\left[1-\exp\left\{-\left(\frac{\alpha}{x}+\frac{\beta}{2x^2}\right)\right\}\right]^{\theta},~~x>0.
\end{eqnarray}
Thus, the PDF of $X$ can be easily obtained from (\ref{eq2.3}), and is given by
\begin{eqnarray}\label{eq2.4}
f(x;\alpha,\beta,\theta)=\theta\left(\frac{\alpha}{x^2}+\frac{\beta}{x^3}\right)\exp\left\{-\left(\frac{\alpha}{x}+\frac{\beta}{2x^2}\right)\right\}\left[1-\exp\left\{-\left(\frac{\alpha}{x}
+\frac{\beta}{2x^2}\right)\right\}\right]^{\theta-1},
\end{eqnarray}
where $x>0$ and $\alpha,~\beta,~\theta>0.$ We call the model associated with CDF and PDF given by (\ref{eq2.3}) and (\ref{eq2.4}), respectively as the IGLFR distribution, and denote $X\sim IGLFR(\alpha,\beta,\theta)$. The PDF and CDF of $IGLFR(\alpha,\beta,\theta)$ distribution are plotted in Figure $1$  and Figure $2$, respectively for various parameter values. 
\begin{figure}[h]
	\centering
	\includegraphics[width=5.5cm,height=5cm]{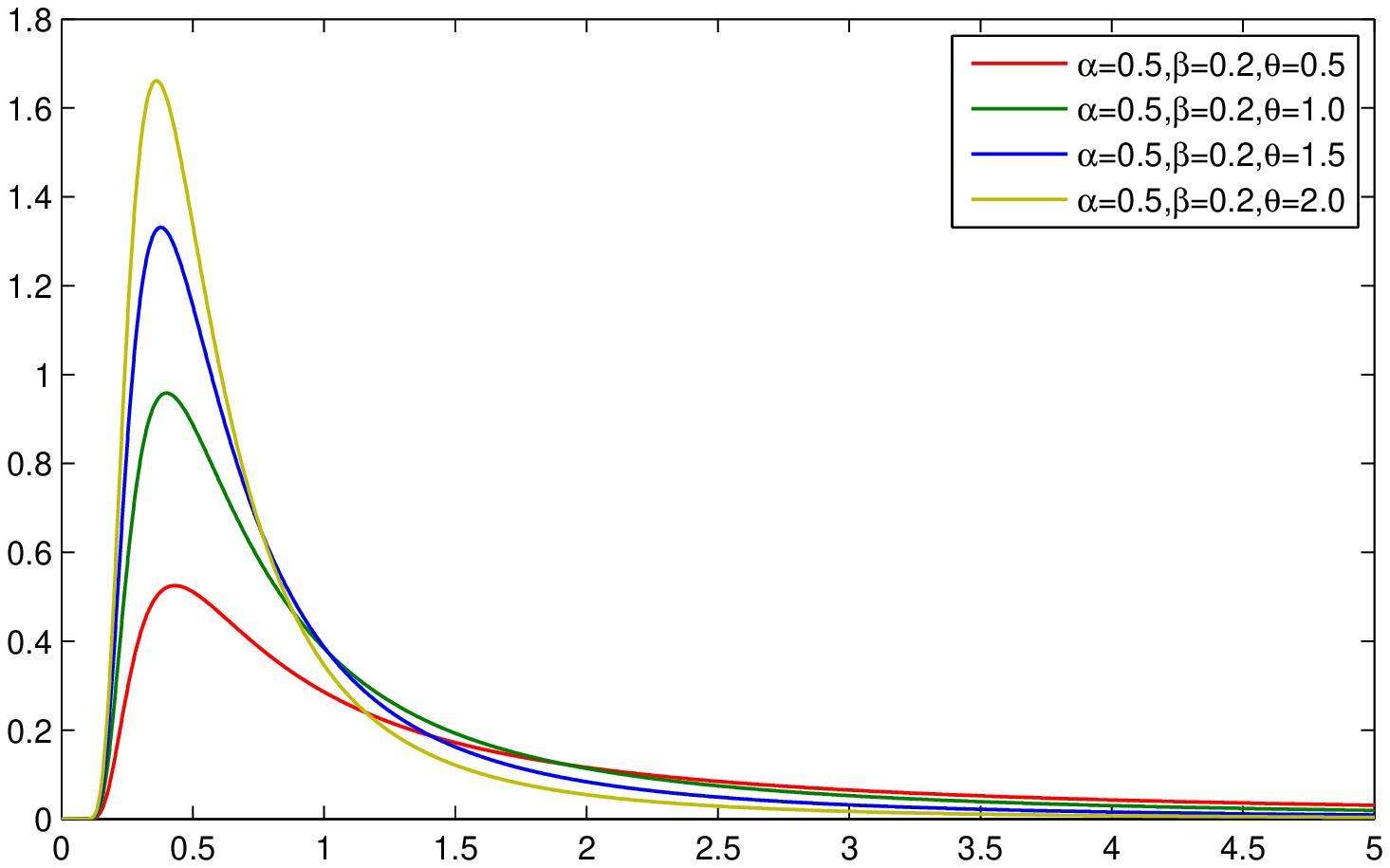}
	\includegraphics[width=5.5cm,height=5cm]{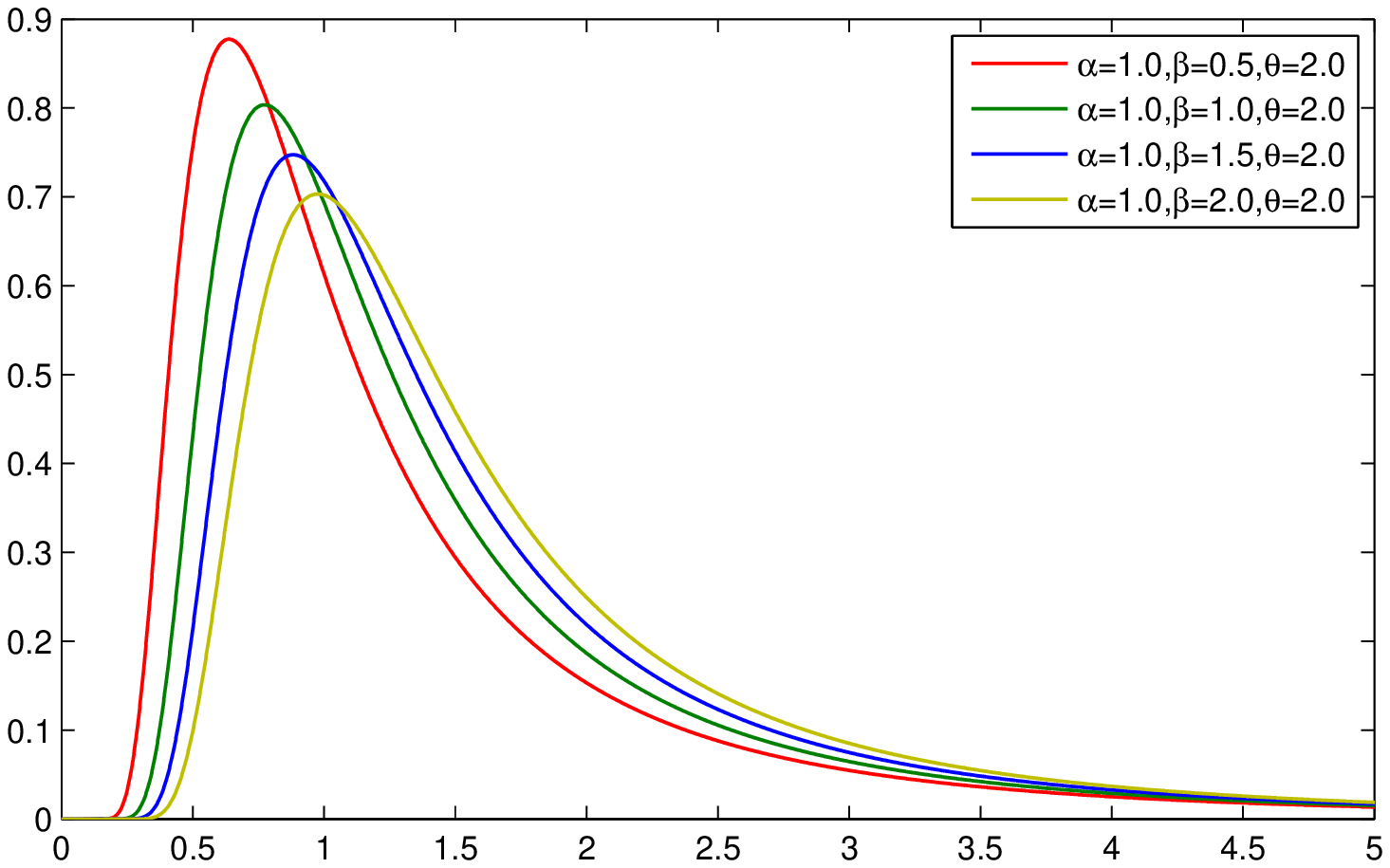}
	\includegraphics[width=5.5cm,height=5cm]{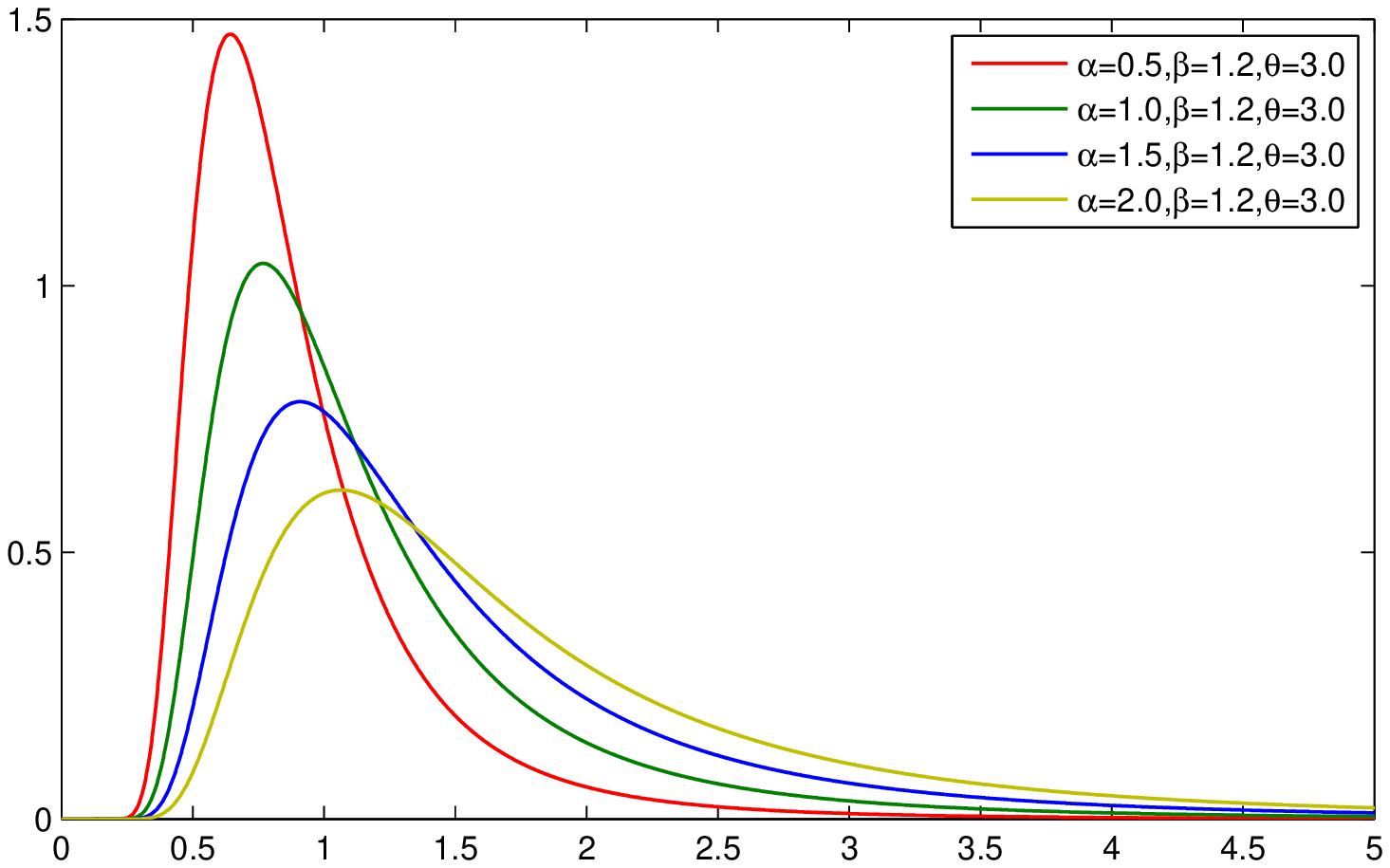}
	\caption{PDFs of IGLFR($\alpha$,$\beta$,$\theta$) distribution, for different values of parameters.}
	\label{fig:PdfofIGLFRD}
\end{figure}

\begin{figure}[h]
	\centering
	\includegraphics[width=5.5cm,height=5cm]{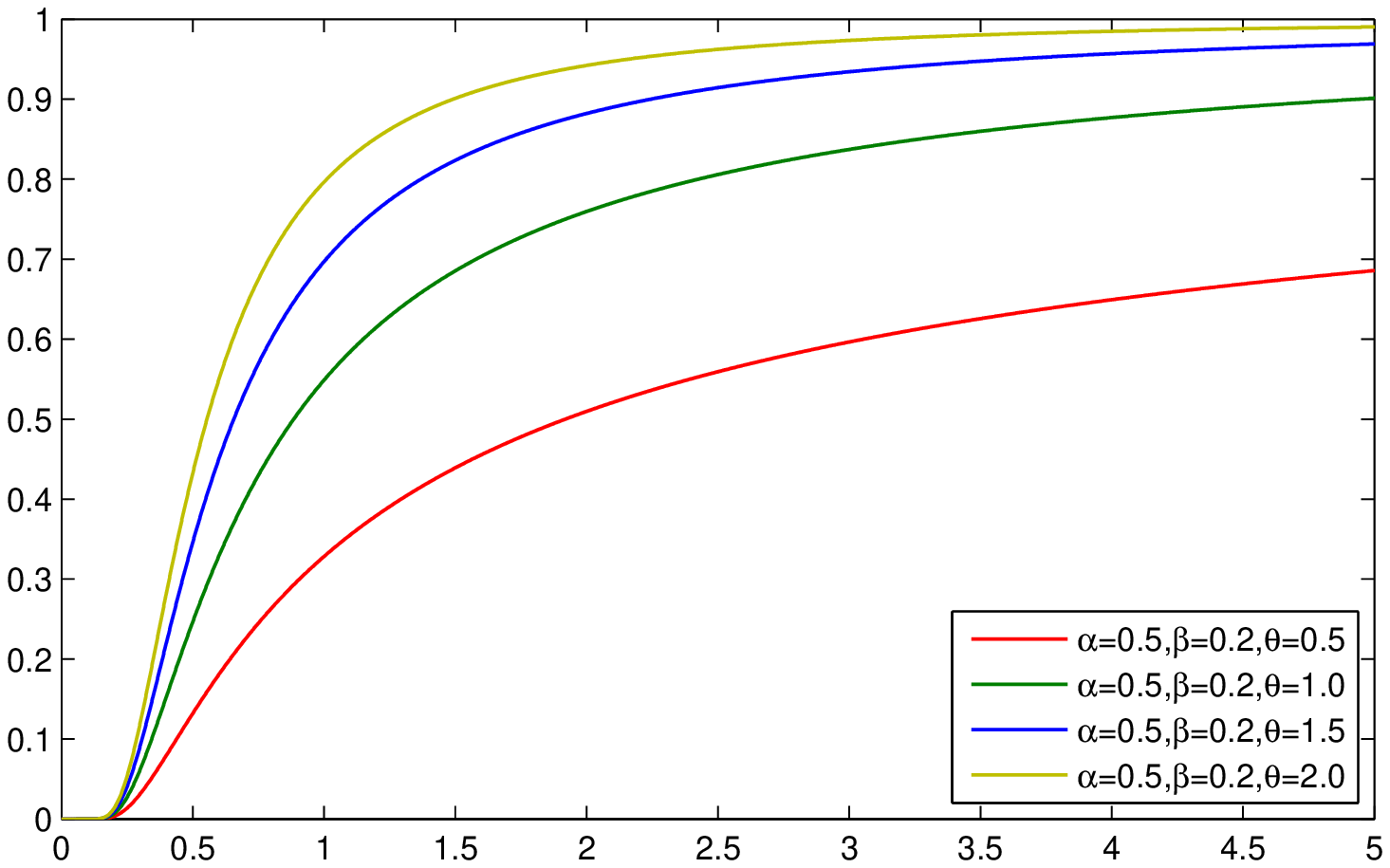}
	\includegraphics[width=5.5cm,height=5cm]{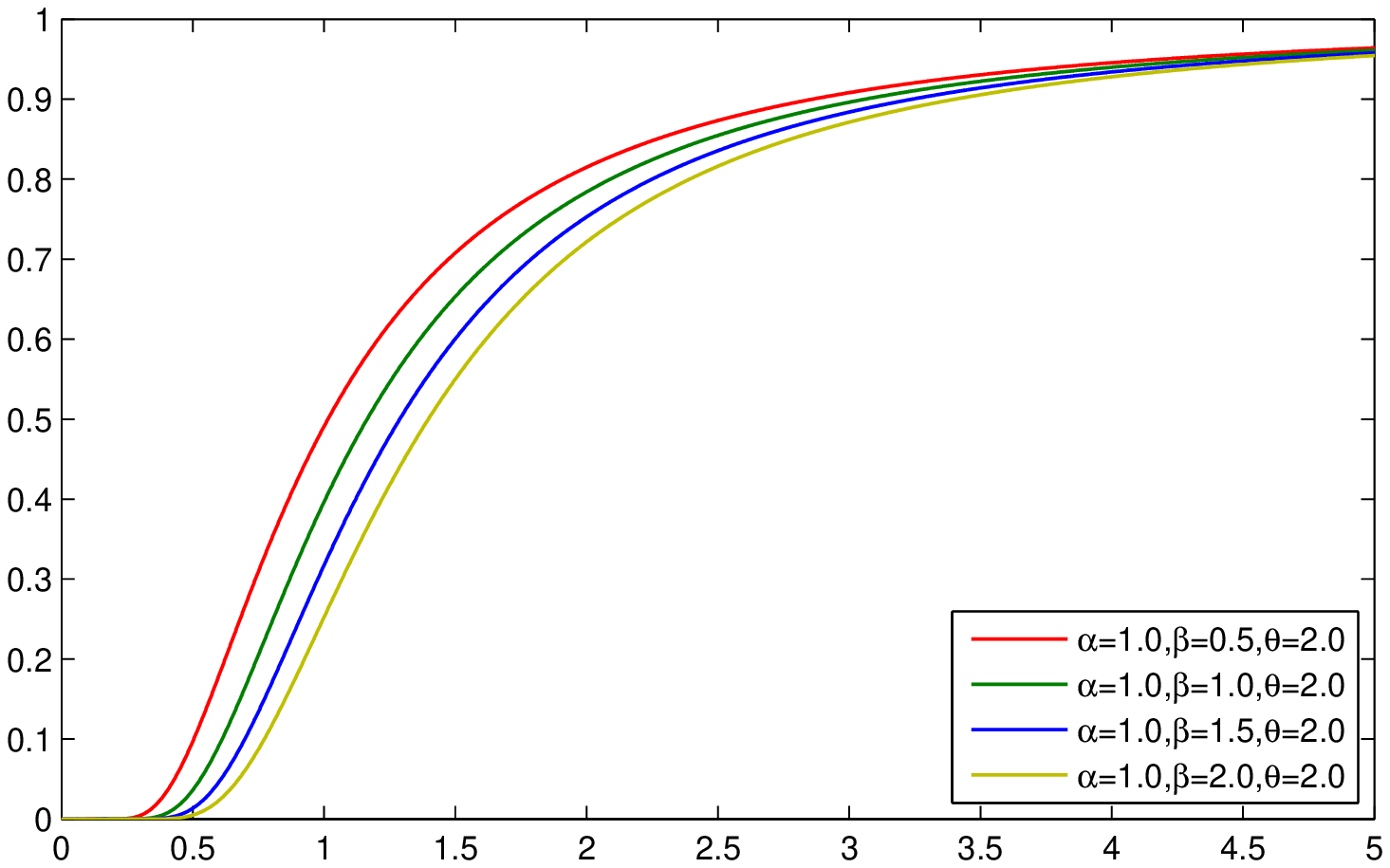}
	\includegraphics[width=5.5cm,height=5cm]{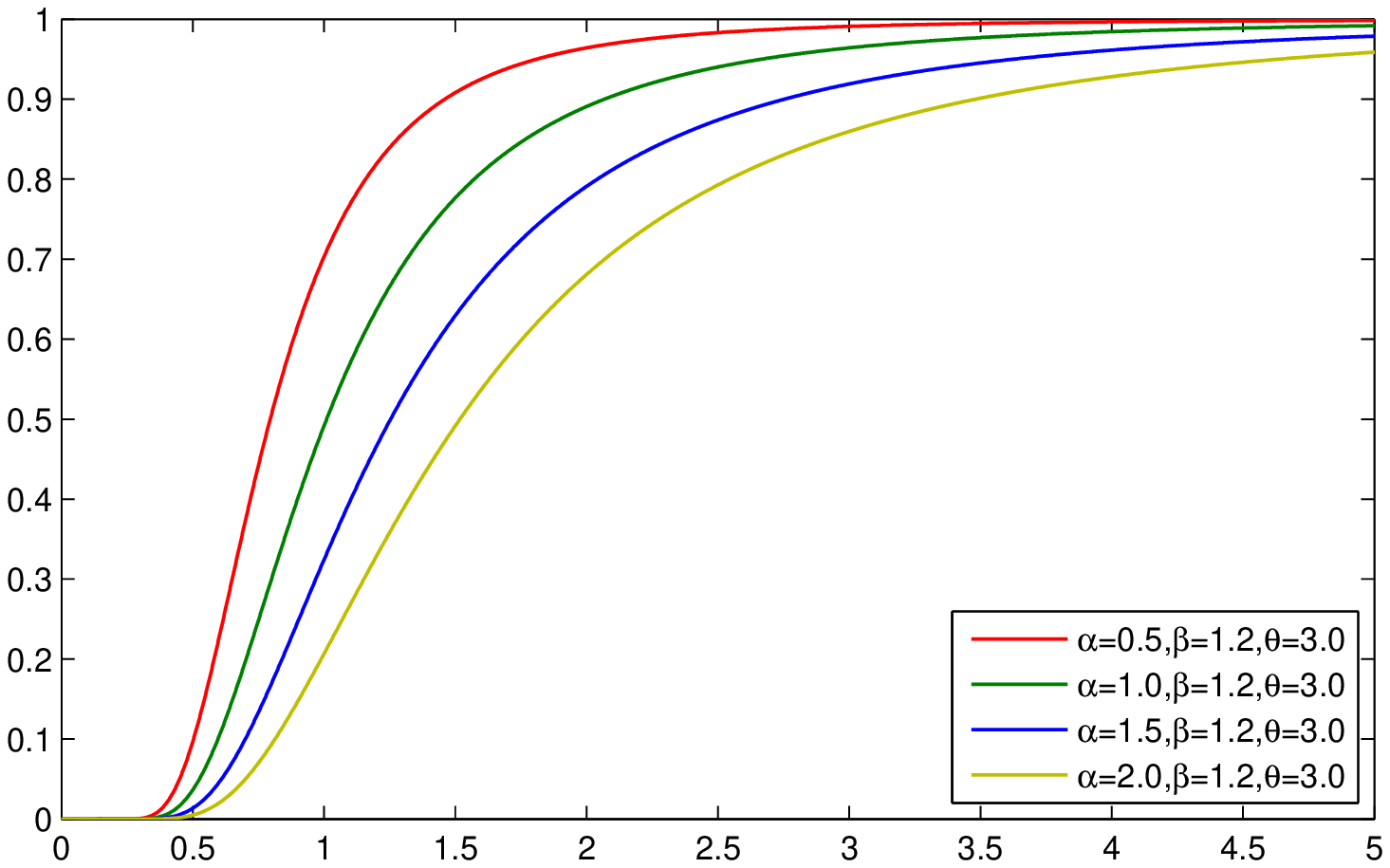}
	\caption{CDFs of IGLFR($\alpha$,$\beta$,$\theta$) distribution, for different values of parameters.}
	\label{fig:CdfofIGLFRD}
\end{figure}

{\bf Some special cases:}
\begin{itemize}
	\item Let $\alpha=0$, $\beta=\frac{2}{\lambda^2},$ and $\theta=\alpha>0.$ Then, the IGLFR distribution with PDF given by (\ref{eq2.4}) reduces to the PDF of generalized inverted Rayleigh distribution (see \cite{bakoban2015estimation}) .
	
	\item Consider $\alpha=\lambda>0$, $\beta=0$, and $\theta=\alpha>0.$ Using these particular values of the parameters, the PDF of IGLFR distribution becomes the PDF of generalized inverted exponential distribution (see \cite{krishna2013reliability}).
	
	\item Let $\alpha=0,$ $\beta=2\sigma^2$, and $\theta=\alpha>0$. Then, the IGLFR distribution reduces to the exponentiated inverse Rayleigh distribution (see \cite{rao2019exponentiated}).
	
	\item For $\alpha=0,$ $\beta=\lambda$ and $\theta=1$, the IGLFR distribution reduces to inverse Rayleigh distribution (see \cite{malik2018new}).
	
	\item Let $\alpha=\alpha>0,$ $\beta=\theta>0$, and $\theta=1$. Then, the IGLFR distribution becomes  modified inverse Rayleigh distribution (see \cite{khan2014modified}).  
	
	\item Consider $\alpha=\frac{1}{\beta}$, $\beta=0$, and $\theta=1.$ Then, the IGLFR distribution reduces to inverted exponential distribution (see \cite{dey2007inverted}).
\end{itemize}

\begin{remark}
For a positive integer $\theta$,  the CDF of $IGLFR(\alpha,\beta,\theta)$ distribution represents the CDF of the minimum of a simple random sample of size $\theta$ from the modified inverse Rayleigh distribution (MIRD) with survival function given by (see \cite{khan2014modified})
\begin{eqnarray*}
\bar{K}(x)=1-\exp\left\{-\left(\frac{\alpha}{x}+\frac{\beta}{2x^2}\right)\right\},~~x>0,~\alpha,\beta>0.
\end{eqnarray*}
Thus, we can say that the $IGLFR(\alpha,\beta,\theta)$ distribution provides the CDF of a series system when each component of the system has the MIRD.
\end{remark}

\begin{remark}
	Let $T=X^{\frac{1}{\lambda}}$, where $\lambda>0$ and $X\sim IGLFR(\alpha,\beta,\theta)$. Then, the CDF of $T$ is obtained as $$K_{T}(x)=1-\left[1-\exp\left\{-\left(\frac{\alpha}{x^{\lambda}}+\frac{\beta}{2x^{2\lambda}}\right)\right\}\right]^\theta,~~x>0.$$
	To the best of our knowledge, the CDF of $T$ does not belong to a known family of distributions. However, if $\beta=0,$ then the CDF of $T$ reduces to CDF of the well-known inverse exponentiated Weibull distribution. 
\end{remark}

\subsection{Hazard rate, reversed hazard rate and odd functions}
In survival analysis, the hazard function plays a central role. Suppose at time $t$, an item is working. Under this prior information, the hazard function represents  the failure probability of the item in next $dt$ units of time.  The hazard rate function uniquely determines a CDF. The hazard function of IGLFR distribution is obtained as
\begin{equation}\label{eq2.5*}
\begin{aligned}
h(x;\alpha,\beta,\theta)=\frac{\theta ~ {(\frac{\alpha}{x^2}+\frac{\beta}{x^3})} ~ {e^{-(\frac{\alpha}{x}+\frac{\beta}{2x^2})}}}{1-e^{-(\frac{\alpha}{x}+\frac{\beta}{2x^2})}}=\theta~ h(x;\alpha,\beta,1).
\end{aligned}
\end{equation}
In the context of reliability or maintenance applications, the hazard rate function describes how likely is the occurence of a failure in the immediate future. From (\ref{eq2.5*}), it is clear that the IGLFR distribution is a proportional hazard family. The plots of $h(x;\alpha,\beta,\theta)$ for different choices of the parameters are depicted in Figure $3$.
\begin{figure}[h]
	\centering
	\includegraphics[width=5.5cm,height=5cm]{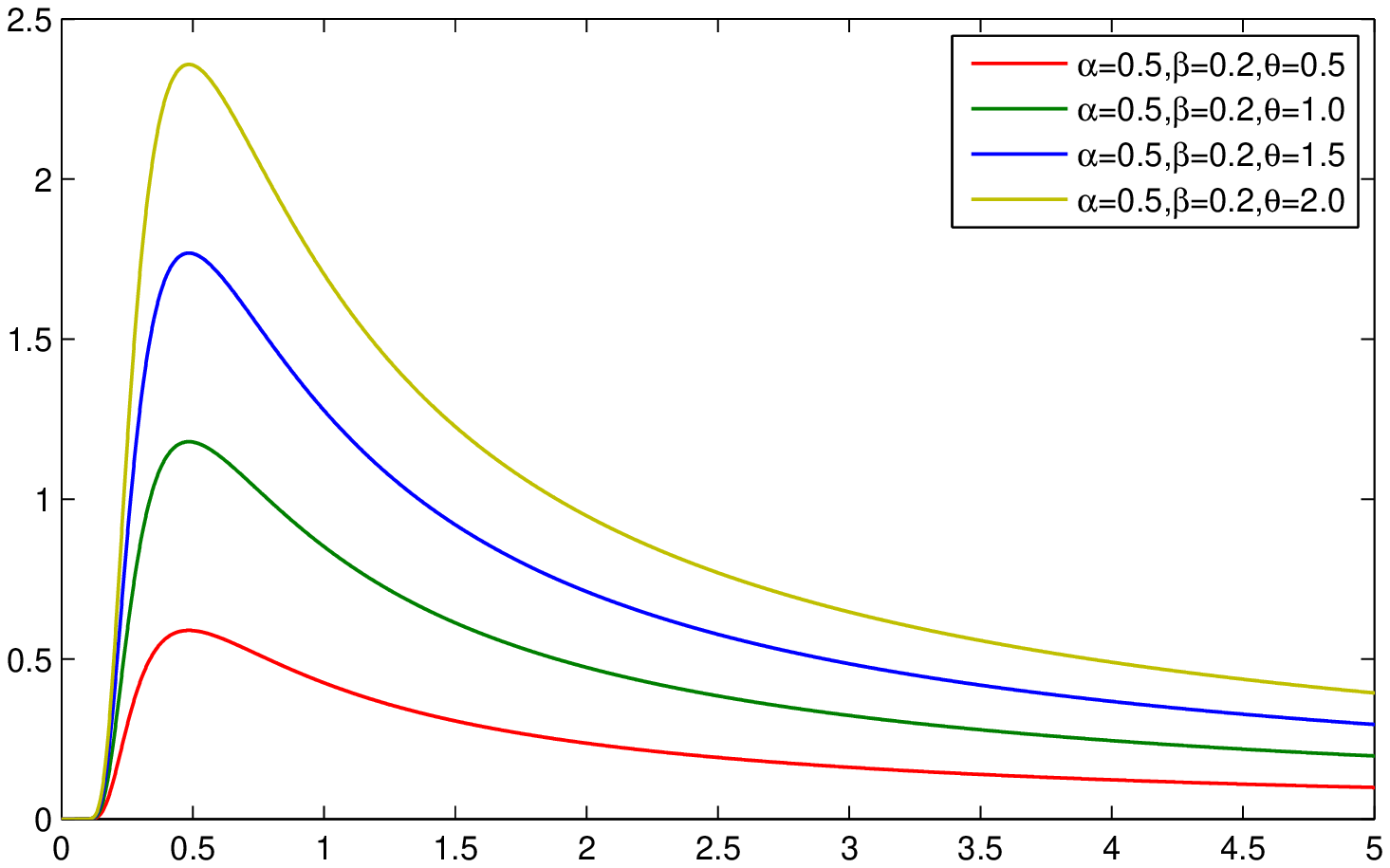}
	\includegraphics[width=5.5cm,height=5cm]{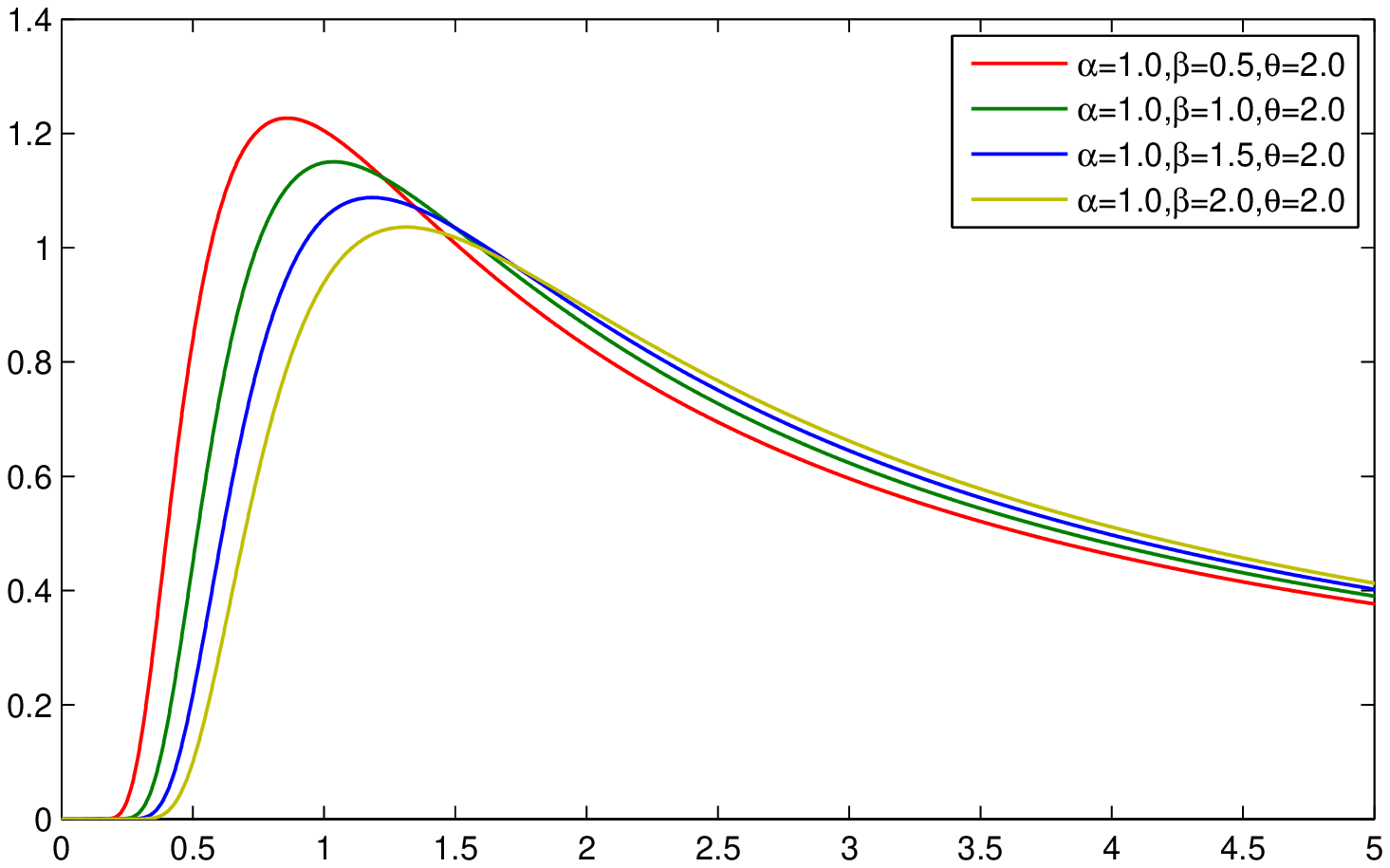}
	\includegraphics[width=5.5cm,height=5cm]{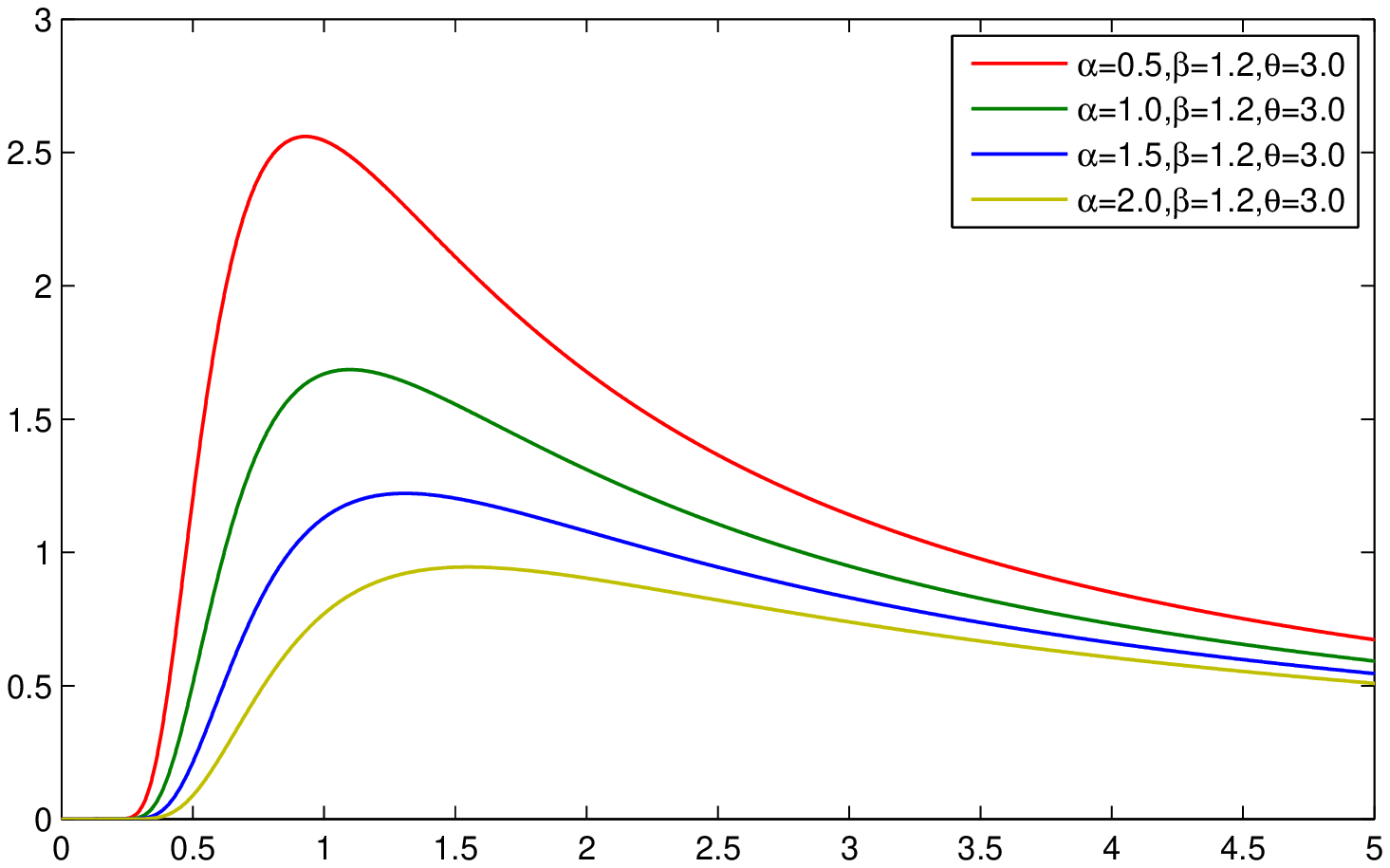}
	\caption{Hazard rate functions of IGLFR($\alpha$,$\beta$,$\theta$), for different values of the parameters.}
	\label{fig:CdfofIGLFRD*}
\end{figure}
As a dual of  hazard function, the reversed hazard rate function characterizes the probability of an immediate past failure, under the information that a failure has already occured. The reversed hazard rate function of a random variable $X$ following IGLFR($\alpha$,$\beta$,$\theta$) is given by 
\begin{equation}
\begin{aligned}
\text{r}(x;\alpha,\beta,\theta)=  \frac{\theta ~{(\frac{\alpha}{x^2}+\frac{\beta}{x^3})} ~ {e^{-(\frac{\alpha}{x}+\frac{\beta}{2x^2})}} ~ \Big[1-e^{-(\frac{\alpha}{x}+\frac{\beta}{2x^2})}\Big]^{\theta-1}}{1-\Big(1-e^{-(\frac{\alpha}{x}+\frac{\beta}{2x^2})}\Big)^ \theta}, ~ x>0.
\end{aligned}
\end{equation}
The reversed hazard rate function of the IGLFR($\alpha$,$\beta$,$\theta$) distribution is plotted in Figure $4$, for different choices of the parameters. 
\begin{figure}[h]
	\centering
	\includegraphics[width=5.5cm,height=5cm]{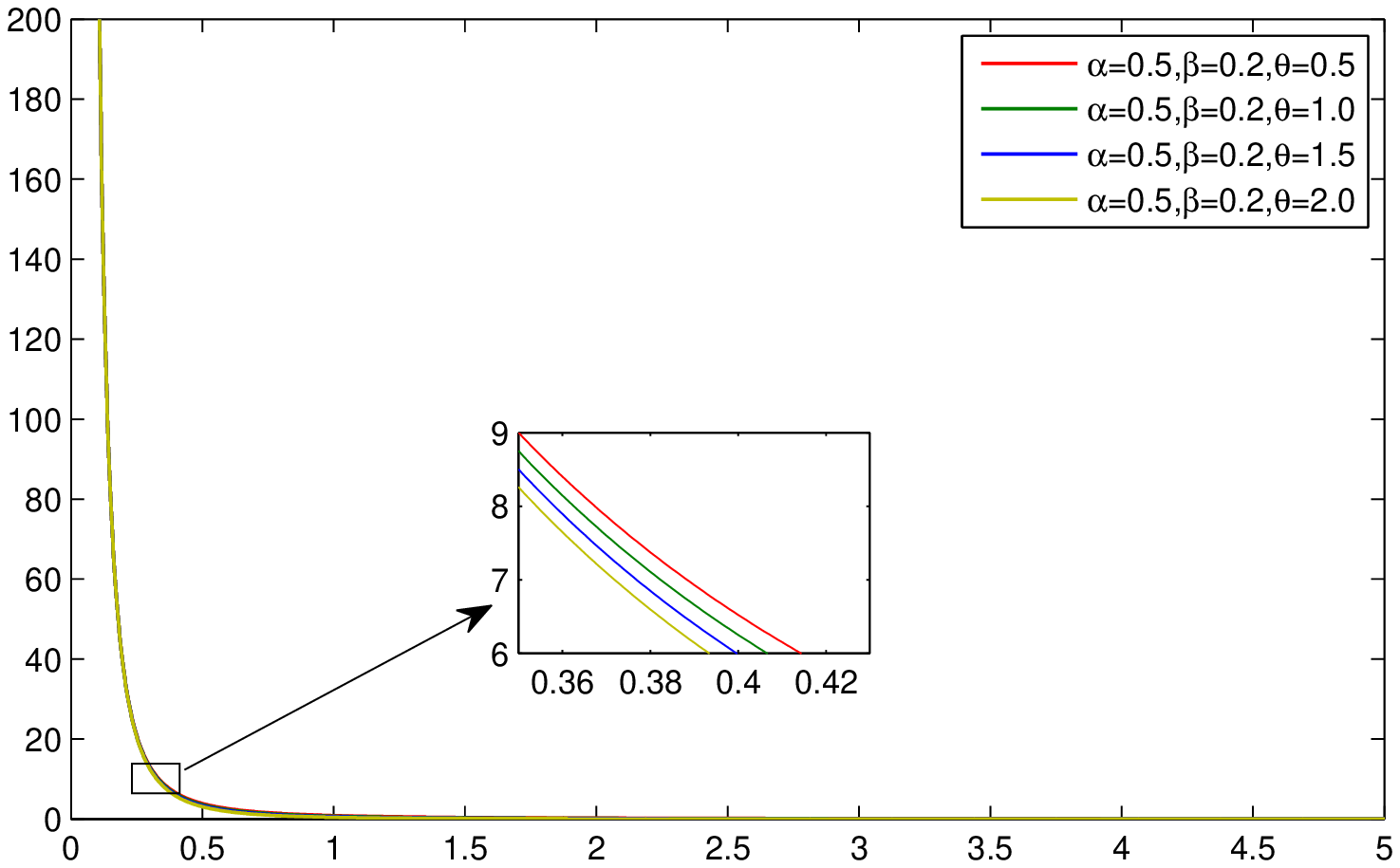}
	\includegraphics[width=5.5cm,height=5cm]{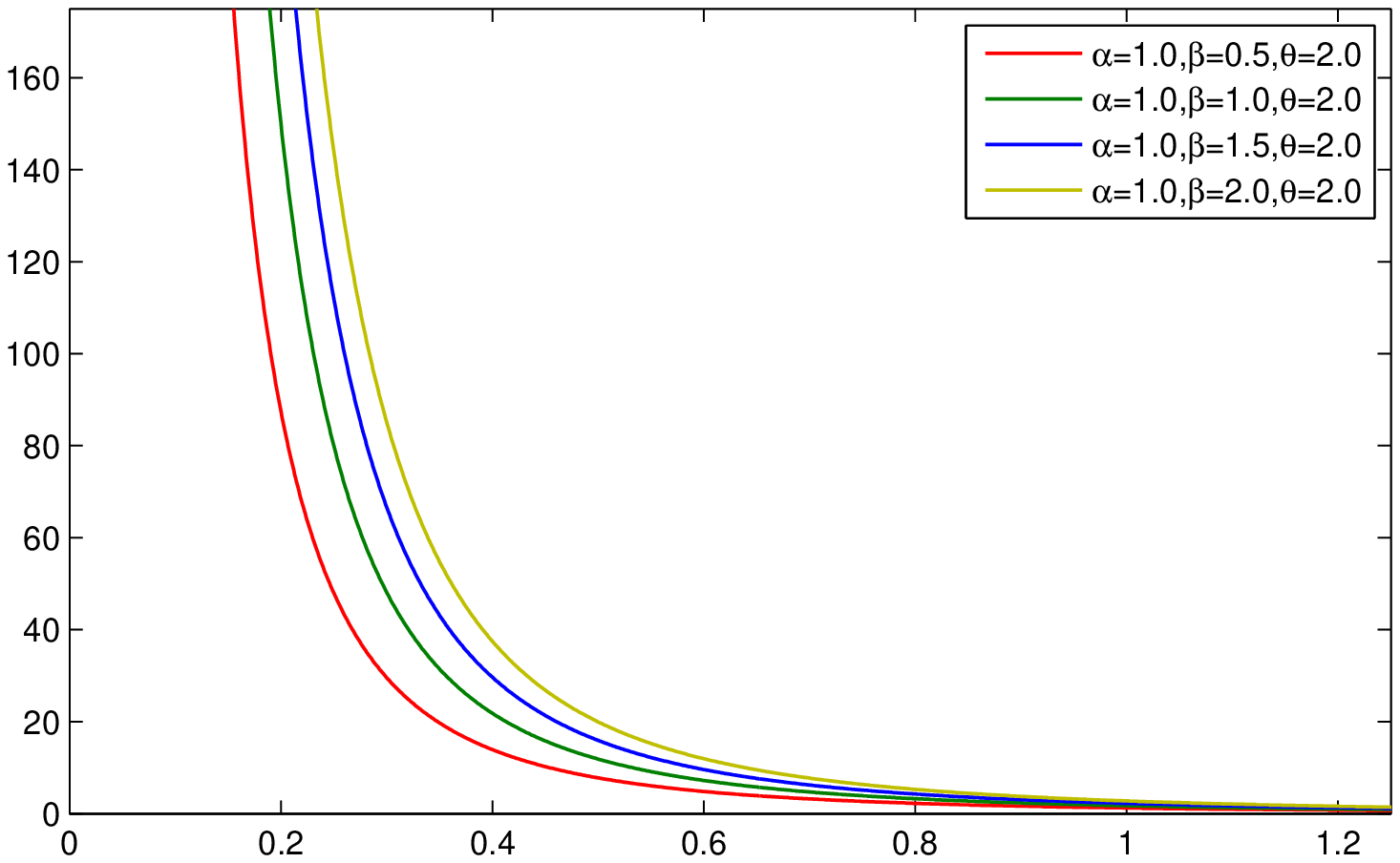}
	\includegraphics[width=5.5cm,height=5cm]{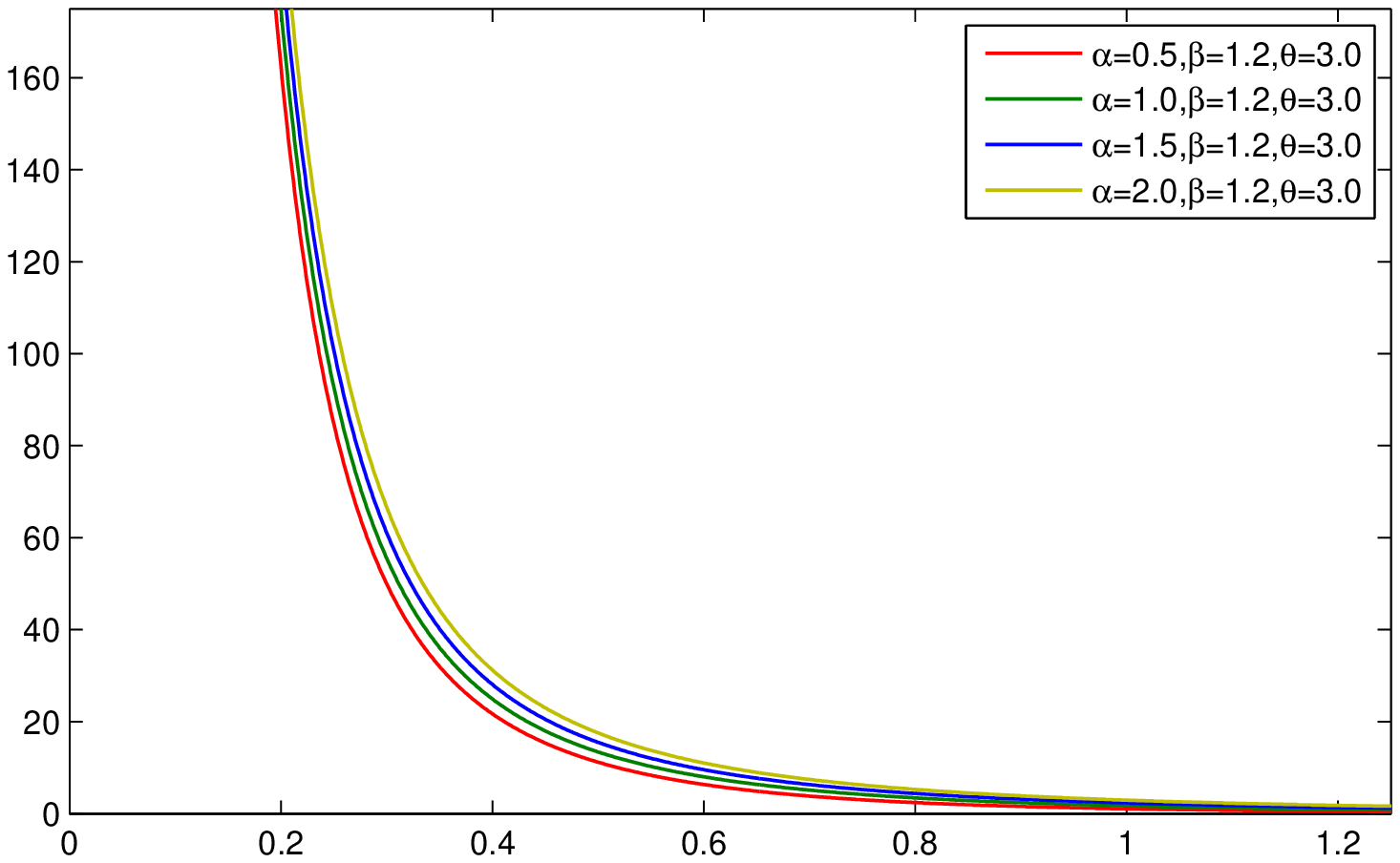}
	\caption{Reversed hazard rate functions of IGLFR($\alpha$,$\beta$,$\theta$).}
	\label{fig:CdfofIGLFRD**}
\end{figure}
The odd function is 
\begin{equation}
\begin{aligned}
\text{O}(x;\alpha,\beta,\theta)=\frac{1}{\Big[1-e^{-(\frac{\alpha}{x}+\frac{\beta}{2x^2})}\Big]^ \theta}-1, ~ x>0.
\end{aligned}
\end{equation}

\subsection{Asymptotic behaviour of the PDF and hazard function}
Here, we will study behaviour of the PDF and hazard  function of the newly proposed model when $x$ tends to $0$ and $\infty.$ After some calculations, we obtain
\begin{eqnarray}
\lim_{x\rightarrow 0} f(x)&=&\lim_{x\rightarrow 0}\left[ \theta\left(\frac{\alpha}{x^2}+\frac{\beta}{x^3}\right)\exp\left\{-\left(\frac{\alpha}{x}+\frac{\beta}{2x^2}\right)\right\}\left[1-\exp\left\{-\left(\frac{\alpha}{x}
+\frac{\beta}{2x^2}\right)\right\}\right]^{\theta-1}\right]\nonumber\\
&=&0
\end{eqnarray}
and 
\begin{eqnarray}
\lim_{x\rightarrow \infty} f(x)&=&\lim_{x\rightarrow 0}\left[ \theta\left(\frac{\alpha}{x^2}+\frac{\beta}{x^3}\right)\exp\left\{-\left(\frac{\alpha}{x}+\frac{\beta}{2x^2}\right)\right\}\left[1-\exp\left\{-\left(\frac{\alpha}{x}
+\frac{\beta}{2x^2}\right)\right\}\right]^{\theta-1}\right]\nonumber\\
&=&0.
\end{eqnarray}
These results show that the proposed model is non-monotonic and has atleast one mode. To verify this, we use graphical approach (see Figure $1$), which indicates that the proposed model has only one mode, that is, the IGLFR distribution is unimodal.  Further,
\begin{eqnarray}
\lim_{x\rightarrow 0} \text{h}(x;\alpha,\beta,\theta)&=&\lim_{x\rightarrow 0}\left[\frac{\theta ~ {(\frac{\alpha}{x^2}+\frac{\beta}{x^3})} ~ {e^{-(\frac{\alpha}{x}+\frac{\beta}{2x^2})}}}{1-e^{-(\frac{\alpha}{x}+\frac{\beta}{2x^2})}}\right]\nonumber\\
&=&0
\end{eqnarray}
and 
\begin{eqnarray}
\lim_{x\rightarrow \infty} \text{h}(x;\alpha,\beta,\theta)&=&\lim_{x\rightarrow \infty}\left[\frac{\theta ~ {(\frac{\alpha}{x^2}+\frac{\beta}{x^3})} ~ {e^{-(\frac{\alpha}{x}+\frac{\beta}{2x^2})}}}{1-e^{-(\frac{\alpha}{x}+\frac{\beta}{2x^2})}}\right]\nonumber\\
&=&0.
\end{eqnarray}
From the above limits, similar observation for the hazard function can be noticed, that is, it has upside down bathtub shape. It is worth mentioning that the GLFR model has increasing, decreasing and bathtub shaped hazard functions (see \cite{sarhan2009generalized}). However, the newly proposed model has upside down bathtub shape hazard function. There are various real life examples, where distributions with upside down bathtub shape hazard function is useful for  analyse. For example, in biology, the upside down bathtub-shaped hazard rate model can be used in the course of a disease whose mortality reaches a peak after some time, and then decreases gradually.

\subsection{Moments, median, quantile and mode}
In this subsection, we will discuss about moments, median, quantile and mode of the newly proposed IGLFR distribution. As expected, the mean and other moments are difficult to derive in closed form. The $r$th order moment of the IGLFR distribution is given by
\begin{eqnarray}\label{eq2.5}
\mu_{r}'&=&\int_{0}^{\infty} x^r f(x) dx\nonumber\\
&=& \int_{0}^{\infty} x^r  \theta\left(\frac{\alpha}{x^2}+\frac{\beta}{x^3}\right)\exp\left\{-\left(\frac{\alpha}{x}+\frac{\beta}{2x^2}\right)\right\}\left[1-\exp\left\{-\left(\frac{\alpha}{x}
+\frac{\beta}{2x^2}\right)\right\}\right]^{\theta-1}dx\nonumber\\
&=& \int_{0}^{\infty} x^{r-3}  \theta\left(\alpha x+\beta\right)\exp\left\{-\left(\frac{\alpha}{x}+\frac{\beta}{2x^2}\right)\right\}\left[1-\exp\left\{-\left(\frac{\alpha}{x}
+\frac{\beta}{2x^2}\right)\right\}\right]^{\theta-1}dx.
\end{eqnarray}
Further, we know that 
$$\left[1-\exp\left\{-\left(\frac{\alpha}{x}
+\frac{\beta}{2x^2}\right)\right\}\right]^{\theta-1}=\sum_{i=0}^{\infty}(-1)^{i} {\theta-1 \choose i}\exp\left\{-i\left(\frac{\alpha}{x}+\frac{\beta}{2x^2}\right)\right\}.$$
Substituting this in (\ref{eq2.5}), we obtain 
\begin{eqnarray}\label{eq2.6}
\mu_{r}'&=& \int_{0}^{\infty} x^{r-3}  \theta\left(\alpha x+\beta\right)\sum_{i=0}^{\infty}(-1)^{i} {\theta-1 \choose i}\exp\left\{-(i+1)\left(\frac{\alpha}{x}+\frac{\beta}{2x^2}\right)\right\}dx\nonumber\\
&=& \theta \sum_{i=0}^{\infty}(-1)^{i} {\theta-1 \choose i} \int_{0}^{\infty} x^{r-3}  \left(\alpha x+\beta\right)\exp\left\{-(i+1)\left(\frac{\alpha}{x}+\frac{\beta}{2x^2}\right)\right\}dx.
\end{eqnarray}

Let the median of $X$ be M. Then, M must satisfy the following equation:
\begin{equation}
\int_{0}^{\text{M}} f(x;\alpha,\beta,\theta)~dx  = \int_{\text{M}}^{\infty} f(x;\alpha,\beta,\theta) ~dx.
\end{equation}
Substituting the PDF given by (\ref{eq2.4}) in the last equation, and then after some calculations, we obtain  
\begin{equation}
\text{Median } = \frac{-\alpha - \sqrt{\alpha^2-2\beta \log\bigg[1 - \Big(\frac{1}{2}\Big)^\frac{1}{\theta}\bigg]} }{2\log\bigg[1 - \Big(\frac{1}{2}\Big)^\frac{1}{\theta}\bigg]}.
\end{equation}\\
Suppose U $\thicksim$ Uniform (0,1), then to generate quantile from IGLFR distribution, we invert the CDF of IGLFR(x;$\alpha,\beta,\theta$).
Let  $x=F^{-1}(q)$ be quantile of the IGLFR(x;$\alpha,\beta,\theta$) then 
from the  equation $F(x;\alpha,\beta,\theta) = q$, the $q$th quantile is turned out as \\
\begin{equation}
	\ \ x_{q}= \frac{-\alpha - \sqrt{\alpha^2-2\beta \log\bigg[1 -\big(1-q\big)^\frac{1}{\theta}\bigg]} }{2\log\bigg[1 -\big(1-q\big)^\frac{1}{\theta}\bigg]},\  \text{since}\ 0<x<1.
\end{equation}
	There are different methods which are used to find skewness and kurtosis in a certain distribution. Quantile based measures, such as Bowley skewness
[\cite{kenney1939mathematics}] and Moors kurtosis [\cite{moors1988quantile}], can quantify asymmetry and the peakedness of a given distribution. Based on the quantiles, the coefficients of skewness and kurtosis are provided by\\
\begin{equation}\nonumber
	\begin{aligned}
		\text{Skewness} =&\frac{(Q_3-Q_2)-(Q_2-Q_1)}{(Q_3-Q_2)+(Q_2-Q_1)}\\
		\text{Kurtosis} =&\frac{(E_7-E_5)-(E_3-E_1)}{(E_6-E_2)}.
	\end{aligned}
\end{equation}
where $Q_{i}$ is the $i^{th}$ quartile, $Q_i=F^{-1}(\frac{i}{4})$ and $E_{i}$ is the $i^{th}$ octile, $E_i=F^{-1}(\frac{i}{8})$.

The mode of the distribution can be obtained using the following approach: 
\begin{equation}\nonumber
\begin{aligned}
f'(x)&=0,\\
f''(x) &< 0.
\end{aligned}
\end{equation}
The first order derivative of the PDF $f(x)$ given by (\ref{eq2.4}) with respect to $x$ is given by 
\begin{equation}
\begin{aligned}
\theta ~{e^{-(\frac{\alpha}{x}+\frac{\beta}{2x^2})}} ~ \Big[1-e^{-(\frac{\alpha}{x}+\frac{\beta}{2x^2})}\Big]^{\theta-2} ~ \Bigg[\bigg(\frac{-2\alpha x^3 - 3\beta x^2 + \alpha^2 x^2 +\beta^2 + 2\alpha\beta x}{x^6} \bigg)&\\ +e^{-(\frac{\alpha}{x}+\frac{\beta}{2x^2})}\bigg(\frac{2\alpha x^3 + 3\beta x^2 -\theta(\beta^2 +2\alpha \beta x + \alpha^2 x^2)}{x^6} \bigg) \Bigg]&=0.\label{eq2.10}
\end{aligned}    	
\end{equation} 
The root of (\ref{eq2.10}), which satisfies $f''(x) < 0$ is known as the mode of the IGLFR distribution. We note that it is difficult to get the roots of (\ref{eq2.10}) in explicit form. Thus, for some particular choices of the parameters, the mode can be obtained by adopting some numerical techniques. 

\subsection{Order statistics}
In this subsection, we derive explicit expression for the PDF of the $k$th order statistic $X_{k:n}$ in a random sample of size $n$ from the IGLFR distribution. Note that the $k$th order statistic has an important interpretation in reliability theory. The lifetime of a $(n-k+1)$-out-of-$n$ system is denoted by $X_{k:n}$. 
Using PDF and CDF of the IGLFR distribution, the CDF and PDF of the $k$th order statistic $X_{k:n}$ are respectively given by 
\begin{eqnarray}
F_{\text{k}:n}(x)&=& \sum_{i=k}^{n} {n \choose i}  \Bigg\{1-\Big[\gamma(\alpha,\beta,x)\Big]^ \theta\Bigg\}^{i} \Big[\gamma(\alpha,\beta,x)\Big]^{(n-i)\theta},\\
f_{\text{k}:n}(x)&=&\frac{n!}{(\text{k}-1)!(\text{n}-\text{k})!}~\theta~{(\frac{\alpha}{x^2}+\frac{\beta}{x^3})}~ {[1-\gamma(\alpha,\beta,x)]}~ \Big[\gamma(\alpha,\beta,x)\Big]^{\theta(n-\text{k}+1)-1}\nonumber\\
 &~&\times \Bigg\{1-\Big[\gamma(\alpha,\beta,x)\Big]^{\theta}\Bigg\}^{(\text{k}-1)}.
\end{eqnarray}
Denote $\gamma(\alpha,\beta,x)=1-\exp\{-(\frac{\alpha}{x}+\frac{\beta}{2x^2})\}.$
By substituting $k=1$ and $k=n$ in the last two equations, we respectively obtain the CDFs and PDFs of the minimum and maximum order statistics, which are given by

\begin{equation}\nonumber
\begin{aligned}
F_{1:n}(x)&= 1-\Big[\gamma(\alpha,\beta,x)\Big]^{n\theta},\\
f_{1:n}(x)&= n\theta~{\left(\frac{\alpha}{x^2}+\frac{\beta}{x^3}\right)}~ {[1-\gamma(\alpha,\beta,x)]}~ \Big[\gamma(\alpha,\beta,x)\Big]^{n\theta-1},
\end{aligned}
\end{equation}
and
\begin{equation}\nonumber
\begin{aligned}
F_{n:n}(x)&= \Bigg\{1-\Big[\gamma(\alpha,\beta,x)\Big]^ \theta\Bigg\}^{n}, \\
f_{n:n}(x)&= n\theta~{\left(\frac{\alpha}{x^2}+\frac{\beta}{x^3}\right)}~ {[1-\gamma(\alpha,\beta,x)]}~ \Big[\gamma(\alpha,\beta,x)\Big]^{(\theta-1)}~\Bigg\{1-\Big[\gamma(\alpha,\beta,x)\Big]^{\theta}\Bigg\}^{(n-1)},
\end{aligned}
\end{equation}
respectively. Note that the minimum and maximum order statistics respectively represent the lifetimes of a series and parallel systems.

\subsection{Stochastic ordering}
To study a comparative behaviour, the concept of stochastic ordering is a well-recognized tool in reliability theory. In this subsection, we discuss stochastic ordering results between two IGLFR distributed random variables. Let $X$ and $Y$ be two absolutely continuous random variables with PDFs $f_{X}(.)$, $f_{Y}(.),$ and CDFs $F_{X}(.)$, $F_{Y}(.),$ respectively. Then, $X$ is said to be smaller than $Y$ in the sense of 
\begin{itemize}
\item the likelihood ratio ordering,  abbreviated by $X\le_{lr}Y$, if $\frac{f_{Y}(x)}{f_{X}(x)}$ is non-decreasing in $x$;
\item the usual stochastic order, abbreviated by $X\le_{st}Y$, if $F_{Y}(x)\le F_{X}(x)$, for all $x;$
\item the hazard rate order, abbreviated by $X\le_{hr}Y$, if $h_{X}(x)\le h_{Y}(x),$ for all $x$, where $h_X$ and $h_Y$ are the hazard rate functions of $X$ and $Y,$ respectively;

\item the reverse hazard rate order, abbreviated by $X\le_{rh}Y$, if $r_{X}(x)\le r_{Y}(x),$ for all $x$, where $r_X$ and $r_Y$ are the reversed hazard rate functions of $X$ and $Y,$ respectively.
\end{itemize}
For various other stochastic orders and their applications, readers are referred to \cite{shaked2007stochastic}. The following result provides sufficient conditions such that the likelihood ratio order between two IGLFR distributed random variables exists. 

\begin{theorem}\label{th2.1}
	Let X $\thicksim$ IGLFR($\alpha_{1}$,$\beta_{1}$,$\theta_{1}$) and Y $\thicksim$ IGLFR($\alpha_{2}$,$\beta_{2}$,$\theta_{2}$). If $\alpha_{1}= \alpha_{2}=\alpha$, $\beta_{1}= \beta_{2}=\beta,$ and  $\theta_{1}\geq \theta_{2},$ then $X\leq_{lr}Y$.
\end{theorem}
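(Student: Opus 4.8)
The plan is to compute the likelihood ratio $f_Y(x)/f_X(x)$ directly from the PDF in (\ref{eq2.4}) and show it is non-decreasing on $(0,\infty)$. Write $\gamma(x)=1-\exp\{-(\frac{\alpha}{x}+\frac{\beta}{2x^2})\}$, which under the hypotheses $\alpha_1=\alpha_2=\alpha$ and $\beta_1=\beta_2=\beta$ is the \emph{same} function for both $X$ and $Y$. Then $f_X(x)=\theta_1(\frac{\alpha}{x^2}+\frac{\beta}{x^3})(1-\gamma(x))\gamma(x)^{\theta_1-1}$ and likewise for $f_Y$ with $\theta_2$ in place of $\theta_1$. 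Since the factors $(\frac{\alpha}{x^2}+\frac{\beta}{x^3})$ and $(1-\gamma(x))$ are common to numerator and denominator, they cancel, leaving
$$\frac{f_Y(x)}{f_X(x)}=\frac{\theta_2}{\theta_1}\,\gamma(x)^{\theta_2-\theta_1}.$$

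The remaining task is to argue that $x\mapsto\gamma(x)^{\theta_2-\theta_1}$ is non-decreasing. First I would observe that the argument $\frac{\alpha}{x}+\frac{\beta}{2x^2}$ is strictly decreasing on $(0,\infty)$ (its derivative is $-\frac{\alpha}{x^2}-\frac{\beta}{x^3}<0$), so $\exp\{-(\frac{\alpha}{x}+\frac{\beta}{2x^2})\}$ is strictly increasing, and hence $\gamma(x)$ is strictly decreasing with values in $(0,1)$. Next, because $\theta_1\ge\theta_2$, the exponent satisfies $\theta_2-\theta_1\le 0$, and a positive decreasing function raised to a non-positive power is non-decreasing. (When $\theta_1=\theta_2$ the ratio is the constant $\theta_2/\theta_1=1$, which is trivially non-decreasing.) Combining these, $f_Y(x)/f_X(x)$ is non-decreasing in $x$, which is precisely the defining condition for $X\le_{lr}Y$.

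I do not anticipate a genuine obstacle: the proportional-$\gamma$ structure (already noted after (\ref{eq2.5*}) as the proportional hazard property) collapses the ratio to a single power of $\gamma(x)$, and the argument reduces to tracking one sign and one monotonicity. The only point that deserves care is pairing the direction of monotonicity of $\gamma$ with the sign of $\theta_2-\theta_1$; reversing either would flip the conclusion. If one prefers to bypass even this elementary calculus, one can instead verify directly that $\frac{d}{dx}\log\!\big(f_Y(x)/f_X(x)\big)=(\theta_2-\theta_1)\,\frac{\gamma'(x)}{\gamma(x)}\ge 0$, using $\gamma'(x)\le 0$ and $\gamma(x)>0$.
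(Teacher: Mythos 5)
Your proof is correct and takes essentially the same route as the paper: compute $\frac{d}{dx}\log\big(f_Y(x)/f_X(x)\big)$ and check its sign, the only cosmetic difference being that you first collapse the ratio to $\frac{\theta_2}{\theta_1}\,\gamma(x)^{\theta_2-\theta_1}$ before differentiating, whereas the paper differentiates the general ratio and then specializes to $\alpha_1=\alpha_2$, $\beta_1=\beta_2$. Incidentally, your sign bookkeeping is right where the paper's printed proof slips: the paper's last line invokes the condition ``$\theta_1\le\theta_2$'' instead of the stated hypothesis $\theta_1\ge\theta_2$ (evidently a typo, since under equal $\alpha$'s and $\beta$'s its expression (\ref{eq2.8}) reduces to $(\theta_1-\theta_2)\,e^{-(\frac{\alpha}{x}+\frac{\beta}{2x^2})}\big(\tfrac{\alpha}{x^2}+\tfrac{\beta}{x^3}\big)\big/\big(1-e^{-(\frac{\alpha}{x}+\frac{\beta}{2x^2})}\big)$, which is nonnegative precisely when $\theta_1\ge\theta_2$, in agreement with your calculation and with Example \ref{exam2.1}).
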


\begin{proof}
We have
	\begin{eqnarray}\label{eq2.7}
	\frac{f_{Y}(x)}{f_{X}(x)}=\frac{\theta_{2}~{(\frac{\alpha_{2}}{x^2}+\frac{\beta_{2}}{x^3})}~ {e^{-(\frac{\alpha_{2}}{x}+\frac{\beta_{2}}{2x^2})}}~ \Big[1-e^{-(\frac{\alpha_{2}}{x}+\frac{\beta_{2}}{2x^2})}\Big]^{\theta_{2}-1}}{\theta_{1}~{(\frac{\alpha_{1}}{x^2}+\frac{\beta_{1}}{x^3})}~ {e^{-(\frac{\alpha_{1}}{x}+\frac{\beta_{1}}{2x^2})}}~ \Big[1-e^{-(\frac{\alpha_{1}}{x}+\frac{\beta_{1}}{2x^2})}\Big]^{\theta_{1}-1}}.
	\end{eqnarray}
	Further,
	\begin{eqnarray}\label{eq2.8}
	\frac{d}{dx}\text{log}\Bigg(\frac{f_{Y}(x)}{f_{X}(x)}\Bigg)&=& \frac{\alpha_{2}}{\alpha_{2}x+\beta_{2}}-\frac{\alpha_{1}}{\alpha_{1}x+\beta_{1}}+
	\frac{\alpha_{2}-\alpha_{1}}{x^{2}}-\frac{\beta_{2}-\beta_{1}}{x^{3}}\nonumber\\
	&~&-\frac{(\theta_{2}-1)~e^{-(\frac{\alpha_{2}}{x}+\frac{\beta_{2}}{2x^2})}~(\frac{\alpha_{2}}{x^2}+\frac{\beta_{2}}{x^3})}{1-e^{-(\frac{\alpha_{2}}{x}+\frac{\beta_{2}}{2x^2})}}+\frac{(\theta_{1}-1)~e^{-(\frac{\alpha_{1}}{x}+\frac{\beta_{1}}{2x^2})}~(\frac{\alpha_{1}}{x^2}+\frac{\beta_{1}}{x^3})}{1-e^{-(\frac{\alpha_{1}}{x}+\frac{\beta_{1}}{2x^2})}}.\nonumber\\
	\end{eqnarray} 
	From (\ref{eq2.8}), clearly, 
	if $\alpha_{1}= \alpha_{2}=\alpha$, $\beta_{1}= \beta_{2}=\beta$, and  $\theta_{1}\leq \theta_{2}$, then $\frac{d}{dx}\text{log}\Big(\frac{f_{Y}(x)}{f_{X}(x)}\Big)\geq 0$, which proves the desired ordering.
\end{proof}

In order to illustrate Theorem \ref{th2.1}, the following numerical example is considered.
\begin{example}\label{exam2.1}
	Take two IGLFR distributed random variables $X$ and $Y$, such that $X\sim IGLFR(0.5,0.8,2)$ and $Y\sim IGLFR(0.5,0.8,1.2)$. Here, $\theta_1=2>\theta_2=1.2$. Now, we plot the ratio of the PDFs of $X$ and $Y$ in Figure $5,$ which ensures that $X\leq_{lr}Y$ holds.
\end{example}

\begin{figure}[h]
	\centering
	\includegraphics[width=15cm,height=7cm]{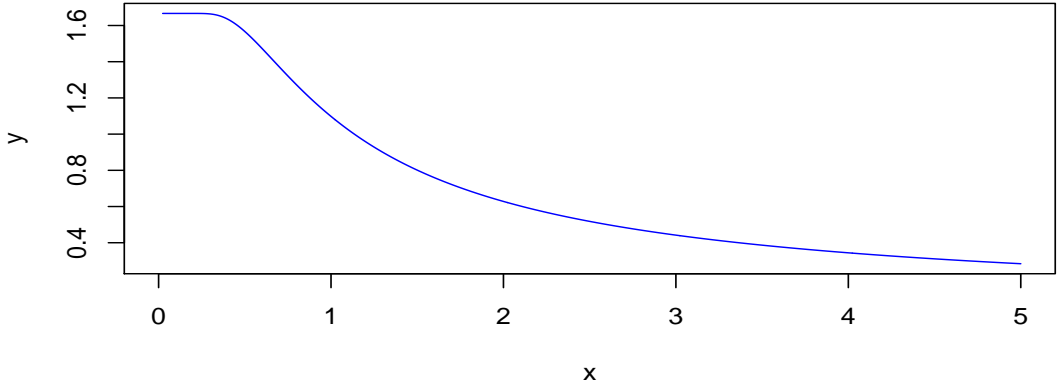}
	\caption{Plot of the ratio of the PDFs of $X$ and $Y$ as in Example \ref{exam2.1}.}
	\label{fig:PdfofIGLFRD**}
\end{figure}

\begin{remark}
	This implication is well-known: $X\leq_{st}Y\Leftarrow X\leq_{rh}Y\Leftarrow X\leq_{lr}Y \Rightarrow X\leq_{hr}Y \Rightarrow X\leq_{st}Y.$ Thus, under similar conditions as in Theorem \ref{th2.1}, the usual stochastic, reversed hazard rate and hazard rate orders also hold. 
\end{remark}

\section{Estimation \setcounter{equation}{0}}\label{sec3}
\subsection{Frequentist approach}

Here, we propose various estimates of parameters of the proposed model. Mainly, we emply two approaches: $(i)$ maximum likelihood estimation and $(ii)$ maximum product spacings estimation. \cite{irshad2021exponentiated} derived MLEs of the parameters of a newly proposed exponentiated unit Lindley distribution. \cite{maya2022unit} obtained MLEs for unit Muth distribution.  First, we derive MLEs of the unknown model parameters for IGLFR distribution.

\subsubsection{Maximum likelihood estimation}
Here, we propose MLEs of the parameters $\alpha$, $\beta$, and $\theta$ for IGLFR distribution. Consider a random sample $(X_{1},\ldots,X_{n})$ drawn from the IGLFR distribution with PDF given by (\ref{eq2.4}). The likelihood function is 
\begin{eqnarray}\label{eq3.1}
L(\alpha,\beta,\theta|data)=\theta^{n}\prod_{i=1}^{n}\left(\frac{\alpha}{x_i^2}+\frac{\beta}{x_i^3}\right)\exp\left\{-\left(\frac{\alpha}{x_i}+\frac{\beta}{2x_i^2}\right)\right\}\left[1-\exp\left\{-\left(\frac{\alpha}{x_i}
+\frac{\beta}{2x_i^2}\right)\right\}\right]^{\theta-1},
\end{eqnarray}
where data=$(x_1,\ldots,x_n)$. The log-likelihood function is obtained as
\begin{eqnarray}\label{eq3.2}
l(\alpha,\beta,\theta)&=&n\log \theta+\sum_{i=1}^{n}\log \left(\frac{\alpha}{x_i^2}+\frac{\beta}{x_i^3}\right)- \sum_{i=1}^{n}\left(\frac{\alpha}{x_i}+\frac{\beta}{2x_i^2}\right)+(\theta-1)\nonumber\\
&~&\times \sum_{i=1}^{n}\log\left[1-\exp\left\{-\left(\frac{\alpha}{x_i}
+\frac{\beta}{2x_i^2}\right)\right\}\right].
\end{eqnarray}
The normal equations can be obtained after differentiating log-likelihood function with respect to the parameters $\alpha$, $\beta$, and $\theta$, and equating them to zero, which are given by 
\begin{eqnarray}
\frac{\partial l}{\partial \alpha} &=&  \sum_{i=1}^{n} \frac{\frac{1}{x_i^2}}{\frac{\alpha}{x_i^2}+\frac{\beta}{x_i^3}} - \sum_{i=1}^{n} \frac{1}{x_i} +(\theta -1) ~ \sum_{i=1}^{n} \frac{\frac{1}{x_i} \exp\{-(\frac{\alpha}{x_i}+\frac{\beta}{2x_i^2})\}}{1-\exp\{-(\frac{\alpha}{x_i}+\frac{\beta}{2x_i^2})\}}=0, \label{eq3.3}\\
\frac{\partial l}{\partial \beta} &=&  \sum_{i=1}^{n} \frac{\frac{1}{x_i^3}}{\frac{\alpha}{x_i^2}+\frac{\beta}{x_i^3}} - \sum_{i=1}^{n} \frac{1}{2x_i^2} +(\theta -1) ~ \sum_{i=1}^{n} \frac{\frac{1}{2x_i^2} \exp\{-(\frac{\alpha}{x_i}+\frac{\beta}{2x_i^2})\}}{1-\exp\{-(\frac{\alpha}{x_i}+\frac{\beta}{2x_i^2})\}}=0,\label{eq3.4}\\
\frac{\partial l}{\partial \theta} &=& \frac{n}{\theta} + \sum_{i=1}^{n} \log\left[1-\exp\left\{-\left(\frac{\alpha}{x_i}+\frac{\beta}{2x_i^2}\right)\right\}\right]=0.\label{eq3.5}
\end{eqnarray}
The above normal equations are difficult to solve simultaneously. Thus, we need numerical technique. The values of $\alpha$ and $\beta$ can be obtained from (\ref{eq3.3}) and (\ref{eq3.4}), using Newton-Raphson method. Then, substituting the values of $\alpha$ and $\beta$ in (\ref{eq3.5}), the value of $\theta$ can be obtained. The MLEs of $\alpha$, $\beta$, and $\theta$ are respectively denoted by $\hat{\alpha}$, $\hat{\beta}$, and $\hat{\theta}.$

\subsubsection{Maximum product spacings estimation}
We derive estimates of the unknown model parameters using maximum product spacings (MPS) estimation method. We recall that the MPS estimation technique is an alternative to MLE for the estimation of parameters of a continuous univariate distribution. The MPS method was introduced by \cite{cheng1979maximum} and \cite{cheng1983estimating}. Later, this method was independently developed as an approximation to the Kullback-Leibler measure by \cite{ranneby1984maximum}. Consider an ordered random sample $X_{1:1}\le\ldots\le X_{n:n}$ from the newly proposed distribution with CDF given by (\ref{eq2.3}). For convenience, in this subsection, we denote $X_{i}=X_{i:n},~i=1,\ldots,n.$ Further, the spacings are denoted by 
\[\text{D}_i = \begin{cases} 
\text{F}(x_1), & \text{when} \ i=1, \\
\text{F}(x_i)-\text{F}(x_{i-1}), & \text{when} \ i=2,\ldots,n, \\
1-\text{F}(x_n), & \text{when} \ i=n+1
\end{cases}
\]
such that $\sum \text{D}_i =1$, see \cite{pyke1965spacings}. The geometric mean of the spacings is given by 
	\begin{eqnarray}\label{eq3.6}
\text{G}^* = \Bigg[\prod_{i=1}^{n+1}\text{D}_i \Bigg]^\frac{1}{n+1}= \Bigg[\text{F}(x_1)\Big\{\prod_{i=2}^{n}[\text{F}(x_i)-\text{F}(x_{i-1})]\Big\} [1-\text{F}(x_n)]\Bigg]^\frac{1}{n+1}.
\end{eqnarray}
Taking logarithm both sides of (\ref{eq3.6}), we obtain 
\begin{eqnarray}\label{eq3.7}
g^*=\log G^* &=&\frac{1}{n+1}  \Bigg[ \log\Big\{1-\big[\gamma(\alpha,\beta,x_1)\big]^ \theta \Big\} + \log\big[\gamma(\alpha,\beta,x_n)\big]^ \theta\nonumber\\
&& + \sum_{i=2}^{n} \log \Big\{ \big[\gamma(\alpha,\beta,x_i)\big]^ \theta -\big[\gamma(\alpha,\beta,x_{i-1})\big]^ \theta  \Big\} \Bigg].
\end{eqnarray}
Now, differentiating (\ref{eq3.7}) with respect to $\alpha,$ $\beta,$ and $\theta$ partially, we get 
\begin{eqnarray}
\frac{\partial g^{*}}{\partial \alpha}&= & \frac{1}{n+1} \Bigg[-\frac{\frac{\theta}{x_1}  \big[1-\gamma(\alpha,\beta,x_1)\big] \big[\gamma(\alpha,\beta,x_1)\big]^ {\theta-1} }{1-\big[\gamma(\alpha,\beta,x_1)\big]^ \theta } 
+ \frac{\frac{\theta}{x_n} [1-\gamma(\alpha,\beta,x_n)] \big[\gamma(\alpha,\beta,x_n)\big]^ {\theta-1} }{\big[\gamma(\alpha,\beta,x_n)\big]^ \theta }
\nonumber\\
&~&+\sum_{i=2}^{n} \frac{\frac{\theta}{x_i} [1-\gamma(\alpha,\beta,x_i)] \big[\gamma(\alpha,\beta,x_i)\big]^ {\theta-1} - \frac{\theta}{x_{i-1}}  [1-\gamma(\alpha,\beta,x_{i-1})] \big[\gamma(\alpha,\beta,x_{i-1})\big]^ {\theta-1} }{\big[\gamma(\alpha,\beta,x_{i-1})\big]^ \theta -\big[\gamma(\alpha,\beta,x_i)\big]^ \theta  } \Bigg],\nonumber\\
\end{eqnarray}

\begin{eqnarray}
\frac{\partial g^*}{\partial \beta} &= & \frac{1}{n+1} \Bigg[-\frac{\frac{\theta}{2x_1^2}  [1-\gamma(\alpha,\beta,x_1)] \big[\gamma(\alpha,\beta,x_1)\big]^ {\theta-1} }{1-\big[\gamma(\alpha,\beta,x_1)\big]^ \theta } +
 \frac{\frac{\theta}{2x_n^2} [1-\gamma(\alpha,\beta,x_n)] \big[\gamma(\alpha,\beta,x_n)\big]^ {\theta-1} }{\big[\gamma(\alpha,\beta,x_n)\big]^ \theta }\nonumber\\
&&+\sum_{i=2}^{n} \frac{\frac{\theta}{2x_i^2}  [1-\gamma(\alpha,\beta,x_i)] \big[\gamma(\alpha,\beta,x_i)\big]^ {\theta-1} - \frac{\theta}{2x_{i-1}^2} [1-\gamma(\alpha,\beta,x_{i-1})] \big[\gamma(\alpha,\beta,x_{i-1})\big]^ {\theta-1} }{\big[\gamma(\alpha,\beta,x_{i-1})\big]^ \theta -\big[\gamma(\alpha,\beta,x_i)\big]^ \theta  } \Bigg],\nonumber\\
\end{eqnarray}
and
\begin{eqnarray}
\frac{\partial g^*}{\partial \theta} &= & \frac{1}{n+1} \Bigg[\frac{-\big[\gamma(\alpha,\beta,x_1)\big]^{\theta} \log \big[\gamma(\alpha,\beta,x_1)\big] }{1-\big[\gamma(\alpha,\beta,x_1)\big]^ \theta } + \frac{\big[\gamma(\alpha,\beta,x_{n})\big]^{\theta} \log \big[\gamma(\alpha,\beta,x_n)\big] }{\big[\gamma(\alpha,\beta,x_n)\big]^ \theta }\nonumber\\ 
&~&+\sum_{i=2}^{n} \frac{\big[\gamma(\alpha,\beta,x_{i-1})\big]^{\theta} \log \big[\gamma(\alpha,\beta,x_{i-1})\big]-\big[\gamma(\alpha,\beta,x_i)\big]^{\theta} \log \big[\gamma(\alpha,\beta,x_i)\big]}{ \big[\gamma(\alpha,\beta,x_{i-1})\big]^ \theta -\big[\gamma(\alpha,\beta,x_i)\big]^ \theta }  \Bigg].
\end{eqnarray}

Clearly, no explicit solutions can be obtained from the nonlinear equations $\frac{\partial g^{*}}{\partial \alpha}=0$, $\frac{\partial g^{*}}{\partial \beta}=0$, and $\frac{\partial g^{*}}{\partial \theta}=0$. Thus, it is required to adopt a suitable numerical technique, say Newton-Raphson method. The solution for $\alpha,$ $\beta,$ and $\theta$ is the MPS estimates. 

\subsubsection{Asymptotic confidence intervals}
Here, we derive asymptotic confidence intervals of  $\alpha,$ $\beta,$ and $\theta$ of the IGLFR distribution. Here, the MLEs of the parameters can not be obtained explicitely. Thus, it is impossible to obtain their exact distributions. Under some regularity conditions, the MLEs $(\hat{\alpha},\hat{\beta},\hat{\theta})$ asymptotically follow tri-variate normal distribution with mean vector $(\alpha,\beta,\theta)$ and variance-covariance matrix $\text{I}_0^{-1}$, where 
\begin{equation}\nonumber
\begin{aligned} 
\text{I}_0^{-1} = &
\begin{pmatrix}
-\frac{\partial^2 \log \text{L}}{\partial \alpha^2} & -\frac{\partial^2 \log \text{L}}{\partial \alpha \partial \beta} & -\frac{\partial^2 \log \text{L}}{\partial \alpha \partial \theta}\\ \\
-\frac{\partial^2 \log \text{L}}{\partial \alpha \partial \beta } & -\frac{\partial^2 \log \text{L}}{\partial \beta^2} & -\frac{\partial^2 \log \text{L}}{\partial \beta \partial \theta} \\
\\
-\frac{\partial^2 \log \text{L}}{\partial \theta \partial \alpha } & -\frac{\partial^2 \log \text{L}}{\partial \theta \partial \beta } & -\frac{\partial^2 \log \text{L}}{\partial \theta^2 }
\end{pmatrix}^{-1}
= & \begin{pmatrix}
\text{Var}(\hat{\alpha}) & \text{Cov}(\hat{\alpha},\hat{\beta}) & \text{Cov}(\hat{\alpha},\hat{\theta})\\ \\
\text{Cov}(\hat{\alpha},\hat{\beta}) & \text{Var}(\hat{\beta}) & \text{Cov}(\hat{\beta},\hat{\theta}) \\ \\
\text{Cov}(\hat{\alpha},\hat{\theta}) & \text{Cov}(\hat{\beta},\hat{\theta}) & \text{Var}(\hat{\theta})
\end{pmatrix}
\end{aligned}
\end{equation}
and 
\begin{equation}\nonumber
\begin{aligned}
\nonumber\frac{\partial^2 \log \text{L}}{\partial \alpha^2}= & - \sum_{i=1}^{n} \frac{\frac{1}{x_i^4}}{\big(\frac{\alpha}{x_i^2}+\frac{\beta}{x_i^3}\big)^2} - (\theta -1) ~ \sum_{i=1}^{n} \frac{e^{-(\frac{\alpha}{x_i}+\frac{\beta}{2x_i^2})}\frac{1}{x_i^2}}{\big[1-e^{-(\frac{\alpha}{x_i}+\frac{\beta}{2x_i^2})}\big]^2},\\ 
\frac{\partial^2 \log \text{L}}{\partial \alpha \partial \beta}= &  - \sum_{i=1}^{n} \frac{\frac{1}{x_i^5}}{\big(\frac{\alpha}{x_i^2}+\frac{\beta}{x_i^3}\big)^2} - (\theta -1) ~ \sum_{i=1}^{n} \frac{e^{-(\frac{\alpha}{x_i}+\frac{\beta}{2x_i^2})}\frac{1}{2x_i^3}}{\big[1-e^{-(\frac{\alpha}{x_i}+\frac{\beta}{2x_i^2})}\big]^2},\\ 
\nonumber \frac{\partial^2 \log \text{L}}{\partial \alpha \partial \theta}= & \sum_{i=1}^{n} \frac{e^{-(\frac{\alpha}{x_i}+\frac{\beta}{2x_i^2})}\frac{1}{x_i}}{\big[1-e^{-(\frac{\alpha}{x_i}+\frac{\beta}{2x_i^2})}\big]},~~~~
\nonumber\frac{\partial^2 \log \text{L}}{\partial \beta \partial \theta}=  \sum_{i=1}^{n} \frac{e^{-(\frac{\alpha}{x_i}+\frac{\beta}{2x_i^2})}\frac{1}{2x_i^2}}{\big[1-e^{-(\frac{\alpha}{x_i}+\frac{\beta}{2x_i^2})}\big]},
\\
\nonumber\frac{\partial^2 \log \text{L}}{\partial \beta^2}= & - \sum_{i=1}^{n} \frac{\frac{1}{x_i^6}}{\big(\frac{\alpha}{x_i^2}+\frac{\beta}{x_i^3}\big)^2} - (\theta -1) ~ \sum_{i=1}^{n} \frac{e^{-(\frac{\alpha}{x_i}+\frac{\beta}{2x_i^2})}\frac{1}{4x_i^4}}{\big[1-e^{-(\frac{\alpha}{x_i}+\frac{\beta}{2x_i^2})}\big]^2},~~
\nonumber \frac{\partial^2 \log \text{L}}{ \partial \theta^2}= -\frac{n}{\theta^2}.
\end{aligned}
\end{equation}

Thus, the $100(1-\nu)$\% confidence intervals for the parameters $\alpha$, $\beta$, and $\theta$ are  respectively obtained as
\begin{equation}\nonumber
\hat{\alpha} \pm \text{Z}_ \frac{\nu}{2} \sqrt{\text{Var} (\hat\alpha)} ,~ ~
\hat{\beta} \pm \text{Z}_ \frac{\nu}{2}  \sqrt{\text{Var}(\hat\beta)} ,~ ~
\hat{\theta} \pm \text{Z}_ \frac{\nu}{2}  \sqrt{\text{Var}(\hat\theta)} ,
\end{equation}
where $Z_\frac{\nu}{2}$ is the upper $\frac{\nu}{2}$th percentile of the standard normal distribution.

\subsection{Bayesian approach}
In many real life applications, the parameters of a model can not be fixed. It varies and as a result, the model parameters have some randomness. Bayesian analysis include the randomness of the parameter into study. Bayesian approach allows us to consider the prior beliefs (in the form of prior distribution) about the  parameter with the available data information (likelihood function). We note that if proper information about the unknown parameters are available, then proper prior distributions are taken into account, otherwise, improper prior distributions are considered in the study. In this subsection, we will derive Bayes estimates of $\alpha$, $\beta$, and $\theta$ with respect to SELF. Let $\delta$ be an estimator of the parameter $\lambda.$ Then, the SELF is defined as 
\begin{eqnarray}
L(\lambda,\delta)=(\delta-\lambda)^2.
\end{eqnarray}
We recall that the SELF is symmetric. Further, under SELF, the mean of the posterior distribution is a Bayes estimator. Note that here the conjugate priors do not exist. Thus, according to \cite{kundu2009bayesian}, we assume independent gamma priors for the unknown parameters. A random variable $T$ is said to follow gamma distribution with parameters $m$ and $n$ if then PDF is 
\begin{eqnarray}\label{eq3.12}
f^*(t|m,n)=\frac{1}{\Gamma(m)n^m}t^{m-1}\exp\{-\frac{t}{n}\},~t>0,~m,n>0.
\end{eqnarray}
Henceforth, we denote $T\sim Gamma(m,n)$ if $T$ has the PDF given by  (\ref{eq3.12}). Consider $\alpha \sim Gamma(a,b)$, $\beta \sim Gamma(c,d),$ and $\theta \sim Gamma(p,q)$. After some calculations, the joint posterior distribution of $\alpha,~\beta,$ and $\theta$ is obtained as  
 \begin{equation}\label{eq3.13}
\pi(\alpha,\beta,\theta |data)= \frac{k  \prod_{i=1}^{n}\Big[{(\frac{\alpha}{x_i^2}+\frac{\beta}{x_i^3})} ~ {[1-\gamma(\alpha,\beta,x_i)]} ~ \big[\gamma(\alpha,\beta,x_i)\big]^{\theta-1}\Big]}{\int_{0}^{\infty}\int_{0}^{\infty}\int_{0}^{\infty}  k \prod_{i=1}^{n}\Big[{(\frac{\alpha}{x_i^2}+\frac{\beta}{x_i^3})} ~ {[1-\gamma(\alpha,\beta,x_i)]} ~ \big[\gamma(\alpha,\beta,x_i)\big]^{\theta-1}\Big] \ d\alpha \ d\beta \  d\theta},
\end{equation}
where $k=k(\alpha,\beta,\theta)=\frac{1}{\Gamma(a) b^a} \frac{1}{\Gamma(c) d^c}\frac{1}{\Gamma(p) q^p}\alpha^{a-1}\beta^{c-1}\theta^{p-1+n} \exp\{-\big(\frac{\alpha}{b}+\frac{\beta}{d}+\frac{\theta}{q}\big)\}.$ Thus, the Bayes estimate of $\alpha$ with respect to SELF, denoted by $\hat{\alpha}_{\text{BE}}$, is obtained as 
\begin{eqnarray}\label{eq2.14}
\hat{\alpha}_{\text{BE}}&=& \int_{0}^{\infty}\int_{0}^{\infty}\int_{0}^{\infty} \alpha ~ \pi(\alpha,\beta,\theta |data)\ d\alpha \ d\beta \  d\theta,
\nonumber\\&=& \frac{\int_{0}^{\infty}\int_{0}^{\infty}\int_{0}^{\infty}\alpha  k \prod_{i=1}^{n}\Big[{(\frac{\alpha}{x_i^2}+\frac{\beta}{x_i^3})} ~ {[1-\gamma(\alpha,\beta,x_i)]}  \big[\gamma(\alpha,\beta,x_i)\big]^{\theta-1}
	\Big]\ d\alpha \ d\beta \  d\theta}{\int_{0}^{\infty}\int_{0}^{\infty}\int_{0}^{\infty}  \text{k} ~ \prod_{i=1}^{n}\Big[{(\frac{\alpha}{x_i^2}+\frac{\beta}{x_i^3})} ~ {[1-\gamma(\alpha,\beta,x_i)]}~ \Big[\gamma(\alpha,\beta,x_i)\Big]^{\theta-1}\Big] \ d\alpha \ d\beta \  d\theta},\nonumber\\
&=& \eta_{1}(data).
\end{eqnarray}
The Bayes estimates of $\beta$ and $\theta$, respectively denoted by $\hat{\beta}_{\text{BE}}$ and $\hat{\theta}_{\text{BE}}$ can be obtained similarly. Here, we omit the expressions for $\hat{\beta}_{\text{BE}}$ and $\hat{\theta}_{\text{BE}}$ to avoid repetitions.  We observe that it is impossible to obtain the Bayes estimates of $\alpha$, $\beta,$ and $\theta$ in closed form. Thus, we will employ MCMC technique to compute the approximate Bayes estimates, which is discussed in the next subsection. 

\subsubsection{MCMC Method}
Here, MCMC approach will be employed to compute the approximate Bayes estimates of the parameters under SELF. From the posterior density function given by (\ref{eq3.13}), we obtain the following conditional posterior density functions: 

\begin{equation}
\pi_{1}(\alpha|\beta,\theta ,data) \propto ~ \alpha^{a-1} \exp\{-\frac{\alpha}{b}\} \prod_{i=1}^{n}\Bigg[{\Big(\frac{\alpha}{x_i^2}+\frac{\beta}{x_i^3}\Big)} ~ {\exp\{-\frac{\alpha}{x_i}\}} ~ \Big[\gamma(\alpha,\beta,x_i)\Big]^{\theta-1}\Bigg],
\end{equation}
\begin{equation}
\pi_{2}(\beta|\alpha,\theta ,data) \propto ~ \beta^{c-1} \exp\{-\frac{\beta}{d}\} \prod_{i=1}^{n}\Bigg[{\Big(\frac{\alpha}{x_i^2}+\frac{\beta}{x_i^3}\Big)} ~ {\exp\{-\frac{\beta}{2x_i^2}\}} ~ \Big[\gamma(\alpha,\beta,x_i)\Big]^{\theta-1}\Bigg]
\end{equation}
and
\begin{equation}
\pi_{3}(\theta|\alpha,\beta,data) \propto ~ \theta^{p-1+n} \exp\{-\frac{\theta}{q}\} \prod_{i=1}^{n} \Big[\gamma(\alpha,\beta,x_i)\Big]^{\theta-1}.
\end{equation} 
The above density functions  $\pi_{1}(\alpha|\beta,\theta ,data), \pi_{2}(\beta|\alpha,\theta,data),$ and  $\pi_{3}(\theta|\alpha,\beta,data)$ can
not be written in the form of any well known distributions. Thus, the MCMC samples can not be generated from these densities.  So, the
Metropolis-Hastings algorithm is utilized to obtain MCMC samples from the conditional density functions. We obtain the Bayes estimates by using the following steps: \\ \\
\textbf{Step 1:} Choose initial values as $\alpha^{(1)} = \hat\alpha$, $\beta^{(1)} = \hat\beta,$ and $\theta^{(1)} = \hat\theta$. Set $i = 1$.\\ 
\textbf{Step 2:} Generate $\alpha^{(i)},~ \beta^{(i)},$ and $ \theta^{(i)}$ with normal distribution as $\alpha^{(i)}$ $\thicksim$ N $(\alpha^{(i-1)}$, $Var(\hat\alpha))$, $\beta^{(i)}$ $\thicksim$ N$(\beta^{(i-1)}$,Var($\hat\beta))$, and $\theta^{(i)}$ $\thicksim$ N$(\theta^{(i-1)}$,Var($\hat\theta))$.\\ 
\textbf{Step 3:} Compute $\Omega_{\alpha} =min \Big(1,\frac{\pi_{1}(\alpha^{(i)}|\beta^{(i-1)},\theta^{(i-1)} ,data)}{\pi_{1}(\alpha^{(i-1)}|\beta^{(i-1)},\theta^{(i-1)} ,data)}\Big)$, $\Omega_{\beta} =min \Big(1,\frac{\pi_{2}(\beta^{(i)}|\alpha^{(i-1)},\theta^{(i-1)} ,data)}{\pi_{2}(\beta^{(i-1)}|\alpha^{(i-1)},\theta^{(i-1)} ,data)}\Big),$ and  $\Omega_{\theta} =min \Big(1,\frac{\pi_{3}(\theta^{(i)}|\alpha^{(i-1)},\beta^{(i-1)} ,data)}{\pi_{3}(\theta^{(i-1)}|\alpha^{(i-1)},\beta^{(i-1)} ,data)}\Big)$. \\ \\
\textbf{Step 4:} Generate samples for $\mathcal{T}_{1}$ $\thicksim$ Uniform(0,1),
$\mathcal{T}_{2}$ $\thicksim$ Uniform(0,1), and $\mathcal{T}_{3}$ $\thicksim$ Uniform(0,1).\\ \\
\textbf{Step 5:} Set
\[\alpha = \begin{cases} 
\alpha^{(i)}, & \text{if} \ \mathcal{T}_{1}\le \Omega_{\alpha}, \\
\alpha^{(i-1)}, & \text{otherwise}. \\
\end{cases}
\]
\[\beta = \begin{cases} 
\beta^{(i)}, & \text{if} \ \mathcal{T}_{2}\le \Omega_{\beta}, \\
\beta^{(i-1)}, & \text{otherwise}. \\
\end{cases}
\]
\[\theta = \begin{cases} 
\theta^{(i)}, & \text{if} \ \mathcal{T}_{3}\le \Omega_{\theta}, \\
\theta^{(i-1)}, & \text{otherwise}. \\
\end{cases}
\]
\\ \textbf{Step 6:} Set i=i+1. \\ \\
\textbf{Step 7:} Repeat steps 1 to 6, $K$ times to get $\alpha^{(1)}, \cdots, \alpha^{(K)}$ ; $\beta^{(1)}, \cdots, \beta^{(K)}$ and $\theta^{(1)} \cdots, \theta^{(K)}$.\\

Thus, under SELF, the Bayes estimates of $\alpha, \beta,$ and $\theta$  are respectively given as
$$\hat\alpha_{SEL}=\frac{1}{K}~\sum_{i=1}^{K}\alpha^{(i)}, ~ \hat\beta_{SEL}=\frac{1}{K}~\sum_{i=1}^{K}\beta^{(i)},~ \mbox{and}~ \hat\theta_{SEL}=\frac{1}{K}~\sum_{i=1}^{K}\theta^{(i)}.$$

The MCMC method is also used to derive $100(1-\frac{\nu}{2})$\% credible intervals of the parameters $\alpha$, $\beta$, and $\theta$ as 
$$\Big({\alpha}_{(t)},~ {\alpha}_{\big(t+(1-\nu)K\big)}\Big),~ 
\Big({\beta}_{(t)},~ {\beta}_{\big(t+(1-\nu)K\big)}\Big),~ \mbox{and}~
\Big({\theta}_{(t)},~ {\theta}_{\big(t+(1-\nu)K\big)}\Big).$$

\section{Simulation study \setcounter{equation}{0}}\label{sec4}
Here, a Monte Carlo simulation is executed  to see the comparative performance of the proposed estimators. On the basis of average bias and mean squared error (MSE), we will examine the estimators' performance. Sample sizes are considered as  $n=20,~30,~40,~50,~60,~70,~80,$ and $90.$ $10,000$ independent samples of size $n$ are generated from IGLFR distribution with two sets of parameters $\alpha=0.5$, $\beta=0.5$, $\theta=1$ and $\alpha=0.3$, $\beta=0.5$, $\theta=0.2$.  We report average bias, obtained by MLE, MPS and Bayes estimates along with their MSEs presented in the parentheses in Table $1$, Table $2$, and Table $3$, for different values of $\alpha,~\beta,$ and $\theta$. All simulations have done by using $R$ software. To compute MLE and MPSE, 'nleqslv' package is used. To obtain Bayes estimates, MCMC samples have been generated by using 'coda' package. The $95\%$ asymptotic confidence and Bayesian credible intervals along with interval lengths (ALs) and coverage probabilities (CPs) are presented in Tables $4$, $5$, and $6$. From the tables, following observations are noticed.
\begin{itemize}
\item From Table $1$, Table $2$, and Table $3$, we clearly observe that the mean squared errors and average biases decrease as the sample size increases. 
\item From Table $4$, Table $5$, and Table $6$, we notice that when the sample size increases the average confidence/credible lengths decrease. There is no specific pattern of increasing or decreasing values of CPs when the sample size increase. 
\item For small sample sizes, the Bayesian credible intervals or the asymptotic confidence intervals are slightly skewed, and they become symmetric for large sample sizes. 
\item Based on average bias and MSEs, in general, the Bayes estimates give better results than the MLEs and MPS estimates.
\item Based on average bias and MSEs, in general, the MLEs give better performance than the MPS estimates.
\item The Bayesian credible intervals provide superior result than the asymptotic confidence intervals based on the average interval lengths.
\item Based on CP, BCIs perform better than ACIs. 
\end{itemize}

From above observations, we can summarize that Bayes estimators have better performance than the classical estimators. In classical estimators, one can prefer MPSE over MLE. Bayesian estimation contains more information than the maximum likelihood estimation. In terms of AL and CP of interval estimates, an experimenter can choose BCIs over ACIs. 

 \begin{table}[htbp!]
	\begin{center}
		\caption{Average bias and MSEs (shown in parentheses) of the estimates for the IGLFR distribution with $\alpha=0.5$, $\beta=0.5$, and $\theta=1$ with different values of $n$.}
		\label{T1}
		\tabcolsep 6pt
		\small
		\scalebox{0.85}{
			\begin{tabular}{*{10}c*{9}{r@{}l}}
				\toprule
				\multicolumn{1}{c}{} &
				\multicolumn{3}{c}{MLE} & \multicolumn{3}{c}{MPSE} & \multicolumn{3}{c}{Bayes} \\
				\cmidrule(lr){2-4}\cmidrule(lr){5-7} \cmidrule(lr){8-10}  
				\multicolumn{1}{c}{$n$}& \multicolumn{1}{c}{$\alpha$} & \multicolumn{1}{c}{$\beta$} & \multicolumn{1}{c}{$\theta$}& \multicolumn{1}{c}{$\alpha$} & \multicolumn{1}{c}{$\beta$} & \multicolumn{1}{c}{$\theta$}& \multicolumn{1}{c}{$\alpha$} & \multicolumn{1}{c}{$\beta$} & \multicolumn{1}{c}{$\theta$} \\
				\hline
				20& 0.1529& -0.0121& 0.2827& 0.4100& -0.2377&  0.6820& 0.0019& 0.0310& 0.0268 \\  
				& (0.2746)& (0.1470)& (0.6573) & (0.3628)& (0.1707)& (1.1797)&(0.0290)&(0.0408)&(0.0545)  \\
				30& 01090 &0.0034 &0.1822 &0.3788 & -0.2327 &0.5338 &-0.0037  &0.0297 &0.0240\\
				&(0.1898) &(0.0959) &(0.3146) &(0.3227) &(0.1298) &(0.6972) &(0.0286) &(0.0360) &(0.0478) \\
				40 &0.0896 &-0.0181 &0.1273 &0.3277 &-0.2177 &0.4110 &0.0011 &0.0132 &0.0143 \\
				&(0.1673) &(0.0792) &(0.2163) &(0.2601) &(0.1147) &(0.4390) &(0.0290) &(0.0283) &(0.0376)\\
				50 &0.0437 &0.0032 &0.0764 &0.2763 &-0.1833 &0.3396 &-0.0064 &0.0180 &0.0089\\
				&(0.1378) &(0.0673) &(0.1594) &(0.2173) &(0.0953) &(0.3387) &(0.0289) &(0.261) &(0.0373)\\
				60 &0.0661 &-0.0207 &0.0894 &0.2761 &-0.1898 &0.3150 &0.0022 &0.0055 &0.0162\\ 
				&(0.1244) &(0.0565) &(0.1393) &(0.1971) &(0.0895) &(0.2638) &(0.0288) &(0.0219) &(0.0341)\\
				70 &0.0502 &-0.0084 &0.0733 &0.2513 &-0.1681 &0.2862 &-0.0049 &0.0157 &0.0109\\
				&(0.1145) &(0.0511) &(0.1190) &(0.1843) &(0.0813) &(0.2356) &(0.0268) &(0.0224) &(0.0305)\\
				80 &0.0399 &-0.0066 &0.0580 &0.2230 &-0.1509 &0.2484 &-0.0031 &0.0101 &0.0103\\
				&(0.1084) &(0.0510) &(0.0995) &(0.1649) &(0.0775) &(0.1936) &(0.0288) &(0.0216) &(0.0297)\\
				90 &0.0259 &-0.0053 &0.0389 &0.2105 &-0.1471 &0.2233 &-0.0066 &0.0083 &0.0024\\
				&(0.0931) &(0.0430) &(0.0866) &(0.1392) &(0.0656) &(0.1553) &(0.0255) &(0.0181) &(0.0276)\\
				100 &0.0300 &-0.0083 &0.0386 &0.2034 &-0.1418 &0.2091 &-0.0066 &0.0096 &0.0011\\ &(0.0859) &(0.0384) &(0.0770) &(0.1299) &(0.0594) &(0.1381) &(0.0248) &(0.0174) &(0.0247)\\
				150 &0.0004 &0.0063 &0.0180 &0.1426 &-0.0990 &0.1446 &-0.0185 &0.0142 &-0.0082\\
				&(0.0631) &(0.0284) &(0.0498) &(0.0819) &(0.0386) &(0.0745) &(0.0238) &(0.0150) &(0.0196)\\
				\bottomrule
		\end{tabular}}
	\end{center}
	\vspace{-0.5cm}
\end{table}
\begin{table}[htbp!]
	\begin{center}
		\caption{Average bias and MSEs (shown in parentheses) of the estimates for the IGLFR distribution with $\alpha=0.3$, $\beta=0.5$, and $\theta=0.2$ with different values of $n$.}
		\label{T2}
		\tabcolsep 6pt
		\small
		\scalebox{0.85}{
			\begin{tabular}{*{11}c*{10}{r@{}l}}
				\toprule
				\multicolumn{1}{c}{} &
				\multicolumn{3}{c}{MLE} & \multicolumn{3}{c}{MPSE} & \multicolumn{3}{c}{Bayes} \\
				\cmidrule(lr){2-4}\cmidrule(lr){5-7} \cmidrule(lr){8-10}  
				\multicolumn{1}{c}{$n$}& \multicolumn{1}{c}{$\alpha$} & \multicolumn{1}{c}{$\beta$} & \multicolumn{1}{c}{$\theta$}& \multicolumn{1}{c}{$\alpha$} & \multicolumn{1}{c}{$\beta$} & \multicolumn{1}{c}{$\theta$}& \multicolumn{1}{c}{$\alpha$} & \multicolumn{1}{c}{$\beta$} & \multicolumn{1}{c}{$\theta$} \\
				\midrule
				20 &0.1389 &0.2113 &0.0179 &0.2014 & -0.1460 &0.0521 &-0.0568 &-0.0448 &-1.314\\  
				&(0.2283) &(0.6509) &(0.0056) &(0.2170) &(0.2477) &(0.0970) &(0.0535) &(0.0579) &(3.1959)  \\ 
				30 &0.0885 &0.1094 &0.0169 &0.1615 &-0.1386 &0.0363 &-0.0491 &-0.0463 &-1.3032\\
				&(0.1313) &(0.3323) &(0.0255) &(0.1364) &(0.1986) &(0.0053) &(0.0497) &(0.0550) &(3.0672)\\
				40 &0.0937 &0.0771 &0.0083 &0.1679 &-0.1355 &0.0270 &-0.0603 &-0.0370 &-1.3147\\
				&(0.1089) &(0.2033) &(0.0039) &(0.1256) &(0.1491) &(0.0032) &(0.0517) &(0.0535) &(2.8042)\\
				50 &0.0387 &0.0659 &0.0068 &0.1138 &-0.1168 &0.0196 &-0.0470& -0.0305 &-1.1898\\
				&(0.0726) &(0.0256) &(0.0130) &(0.0861) &(0.0975) &(0.0023) &(0.0484) &(0.0498) &(2.373)\\
				60 &0.0398 &0.0396 &0.0099 &0.1163 &-0.1210 &0.0182 &-0.0446 &-0.0246 &-1.1871\\
				&(0.0632) &(0.9200) &(0.0288) &(0.0813) &(0.0854) &(0.0019) &(0.0437)&(0.0476) &(2.3030)\\
				70 &0.0501 &0.0209 &0.0091 &0.1241 &-0.1173 &0.0192 &-0.0349 &-0.0340 &-1.0715\\
				&(0.0686) &(0.0973) &(0.0615) &(0.0867) &(0.0878) &(0.0019) &(0.0446) &(0.0426) &(1.9222)\\
				80 &0.0274 &0.0366 &0.0063 &0.0898 &-0.0919 &0.0154 &-0.0345 &-0.0291 &-1.006\\
				&(0.0533) &(0.0817) &(0.0078) &(0.0655) &(0.0713) &(0.0014) &(0.0426) &(0.0425) &(1.6935)\\
				90 &0.0290 &0.0193 &0.0060 &0.0914 &-0.0989 &0.0127 &-0.0331 &-0.0252 &-1.0627\\
				&(0.0471) &(0.0796) &(0.0332) &(0.0573) &(0.0547) &(0.0011) &(0.0404) &(0.0391) &(1.7514)\\
				100 &0.0286 &0.0089 &0.0058 &0.0961 &-0.1045 &0.0129 &-0.0443 &-0.0173 &-0.9954\\ &(0.0455) &(0.0789 ) &(0.0186) &(0.0574) &(0.0544) &(0.0011) &(0.0361) &(0.0362) &(1.5309)\\
				150 &0.0043 &0.0141 &-0.0009 &0.0616 &-0.0627 &0.0101 &-0.0055 &-0.0015 &-0.0808 \\ 
				&(0.0321) &(0.0467) &(0.0073) &(0.0386) &(0.0363) &(0.0024) &(0.0016) &(0.0012) &(0.0995)\\
				\bottomrule
		\end{tabular}}
	\end{center}
	\vspace{-0.5cm}
\end{table}
\begin{table}[htbp!]
	\begin{center}
		\caption{Average bias and MSEs (shown in parentheses) of the estimates for the IGLFR distribution  with $\alpha=1$, $\beta=0.7$, and $\theta=0.4$ with different values of $n$.}
		\label{T3}
		\tabcolsep 6pt
		\small
		\scalebox{0.85}{
			\begin{tabular}{*{11}c*{10}{r@{}l}}
				\toprule
				\multicolumn{1}{c}{} &
				\multicolumn{3}{c}{MLE} & \multicolumn{3}{c}{MPSE} & \multicolumn{3}{c}{Bayes} \\
				\cmidrule(lr){2-4}\cmidrule(lr){5-7} \cmidrule(lr){8-10}  
				\multicolumn{1}{c}{$n$}& \multicolumn{1}{c}{$\alpha$} & \multicolumn{1}{c}{$\beta$} & \multicolumn{1}{c}{$\theta$}& \multicolumn{1}{c}{$\alpha$} & \multicolumn{1}{c}{$\beta$} & \multicolumn{1}{c}{$\theta$}& \multicolumn{1}{c}{$\alpha$} & \multicolumn{1}{c}{$\beta$} & \multicolumn{1}{c}{$\theta$} \\
				\midrule
				20 &0.1299 &0.3016 &0.1220 &0.1079 &0.1345 &0.0974 &-0.0978 &-0.0852 &-0.0869 \\
				&(0.6000) &(2.5044) &(0.0224) &(0.4964) &(0.9482) &(0.0360) &(0.1088) &(0.0864) &(0.3562)\\
				30 &0.1224 &0.2930 &0.1122 &0.0966 &0.1248 &0.0735 &-0.0957 &-0.0740&-0.0270  \\
				&(0.3851) &(1.2016) &(0.0144) &(0.3429) &(0.5968) &(0.0240) &(0.0922) &(0.0781) &(0.1109)\\
				40 &0.1108 &0.2688 &0.1046 &0.0952 &0.1195 &0.0505 &-0.0891 &-0.0687 &-0.0221\\
				&(0.3243) &(0.7481) &(0.0103) &(0.2905) &(0.4606) &(0.0138) &(0.0820) &(0.0691) &(0.0731)\\
				50 &0.1044 &0.2421 &0.1040 &0.0937 &0.1074 &0.0449 &-0.0784 &-0.0500 &-0.0165\\
				&(0.2685) &(0.6839) &(0.0086) &(0.2346) &(0.4379) &(0.0113) &(0.0730) &(0.0709) &(0.0670)\\
				60 &0.0978 &0.2399 &0.1004 &0.0927 &0.1038 &0.0356 &-0.0705 &-0.0397 &-0.0093\\
				&(0.2301) &(0.5042) &(0.0061) &(0.2147) &(0.3523) &(0.0078) &(0.0717) &(0.0731) &(0.0380)\\
				70 &0.0908 &0.2664 &0.0987 &0.0894 &0.0983 &0.0350 &-0.0600 &-0.0509 &-0.0069\\
				&(0.2125) &(0.4120) &(0.0060) &(0.1951) &(0.3147) &(0.0073) &(0.0601) &(0.0630) &(0.0270)\\
				80 &0.0897 &0.2364 &0.0941 &0.0875 &0.0958 &0.0289 &-0.0402 &-0.0385 &-0.0064\\
				&(0.1820) &(0.3720) &(0.0052 ) &(0.1797) &(0.2942) &(0.0065) &(0.0623) &(0.0709) &(0.0287)\\
				90 &0.0855 &0.2226 &0.0925 &0.0841 &0.0918 &0.0289 &-0.0402 &-0.0385 &-0.0064\\
				&(0.1728) &(0.3379) &(0.0045) &(0.1752) &(0.2802) &(0.0055) &(0.0539) &(0.0628) &(0.0239)\\
				100 &0.0839 &0.2016 &0.0888 &0.0813 &0.0909 &0.0269 &-0.0398 &-0.0341 &-0.0016\\
				&(0.1542) &(0.2763) &(0.0039) &(0.1617) &(0.2512) &(0.0048) &(0.0545) &(0.0580) &(0.0106)\\
				150 &0.0828 &0.1937 &0.0832 &0.0805 &0.0891 &0.0207 &-0.0292 &-0.0099 &0.0048\\
				&(0.1016) &(0.1903) &(0.0023) &(0.1103) &(0.1786) &(0.0029) &(0.0416) &(0.0507) &(0.0016)\\
				\bottomrule
		\end{tabular}}
	\end{center}
	\vspace{-0.5cm}
\end{table}

\begin{table}[htbp!]
	\begin{center}
		\caption{ ALs of $95\%$ asymptotic confidence intervals (ACI) and Bayesian credible intervals (BCI) and their corresponding CPs (in parentheses) with $\alpha=0.5$, $\beta=0.5$, and $\theta=1$ with different values of $n$. }
		\label{T4}
		\tabcolsep 2pt
		\small
		\scalebox{0.85}{
			\begin{tabular}{*{11}c*{10}{r@{}l}}
				\toprule
				\multicolumn{1}{c}{} &
				\multicolumn{3}{c}{ACI} & \multicolumn{3}{c}{BCI} \\
				\cmidrule(lr){2-4}\cmidrule(lr){5-7}  
				\multicolumn{1}{c}{$n$}& \multicolumn{1}{c}{$\alpha$} & \multicolumn{1}{c}{$\beta$} & \multicolumn{1}{c}{$\theta$}& \multicolumn{1}{c}{$\alpha$} & \multicolumn{1}{c}{$\beta$} & \multicolumn{1}{c}{$\theta$} \\
				\midrule
				20& 1.9150 & 1.4573 & 2.7331 & 0.6034 & 0.7079 & 0.8788\\
				& (0.9342)& (0.9298)& (0.9391)& (0.9485)& (0.9507)& (0.9527) \\
				30& 1.6524 &1.267 &2.1736 &0.6033 &0.6611 & 0.7868 \\
				& (0.9358)& (0.9314)& (0.9402)& (0.9473)& (0.9518)& (0.9511) \\
				40& 0.14596 &1.1098 & 1.7382 & 0.6099 &0.6210 &0.7239\\
				& (0.9414)& (0.9337)& (0.9430)& (0.9510)& (0.9522)& (0.9517) \\
				50& 1.3370 &1.0681 &1.5426 &0.6226 & 0.604 &0.7239\\
				& (0.9421)& (0.9385)& (0.9457)& (0.9523)& (0.9531)& (0.9549) \\
				60& 1.3029 &0.9975 &1.3441 &0.6292 &0.5536 &0.6723\\
				& (0.9441)& (0.9383)& (0.9457)& (0.9529)& (0.9537)& (0.9525) \\
				70 &1.2344 &0.9652 &1.2958 &0.5849 &0.5586 &0.6421\\
				& (0.9427)& (0.9405)& (0.9453)& (0.9525)& (0.9537)& (0.9544) \\
				80 &1.1892 &0.9032 &1.216 &0.6277 &0.5614 &0.6515\\
				& (0.9398)& (0.9439)& (0.9458)& (0.9538)& (0.9521)& (0.9513) \\
				90 &1.1532 &0.8651 &1.1555 &0.5766 &0.5229 &0.6069 \\
				& (0.9454)& (0.9429)& (0.9451)& (0.9509)& (0.9517)& (0.9526) \\
				100 &1.1215 &0.8227 &1.0825 &0.5687 &0.5021 &0.6022 \\
				& (0.9433)& (0.9416)& (0.9447)& (0.9526)& (0.9532)& (0.9541) \\
				150 &0.9949 &0.6847 &0.8899 &0.5625 &0.4576 &0.5388\\
				& (0.9433)& (0.9415)& (0.9443)& (0.9524)& (0.9529)& (0.9539) \\
				\bottomrule
		\end{tabular}}
	\end{center}
	\vspace{-0.5cm}
\end{table}

\begin{table}[htbp!]
	\begin{center}
		\caption{ALs of $95\%$ asymptotic confidence intervals (ACI) and Bayesian credible intervals (BCI) and their corresponding CPs (in parentheses) with $\alpha=0.3$, $\beta=0.5$, and $\theta=0.2$ with different values of $n$. }
		\label{T5}
		\tabcolsep 4pt
		\small
		\scalebox{0.85}{
			\begin{tabular}{*{11}c*{10}{r@{}l}}
				\toprule
				\multicolumn{1}{c}{} &
				\multicolumn{3}{c}{ACI} & \multicolumn{3}{c}{BCI} \\
				\cmidrule(lr){2-4}\cmidrule(lr){5-7}   
				\multicolumn{1}{c}{$n$}& \multicolumn{1}{c}{$\alpha$} & \multicolumn{1}{c}{$\beta$} & \multicolumn{1}{c}{$\theta$}& \multicolumn{1}{c}{$\alpha$} & \multicolumn{1}{c}{$\beta$} & \multicolumn{1}{c}{$\theta$} \\
				\midrule
				20& 1.2813 &2.0610 &0.6583 &0.6726 &0.8700 &0.4641\\
				& (0.9415)& (0.9389)& (0.9422)& (0.9514)& (0.9532)& (0.9508) \\
				30& 1.0218 &1.4866 &0.6019 &0.6737 &0.8299 &0.4058\\
				& (0.9421)& (0.9406)& (0.9427)& (0.9521)& (0.9519)& (0.9525) \\
				40& 0.9584 &1.3192 &0.5763 &0.6762 &0.8352 &0.3747\\
				& (0.9433)& (0.9396)& (0.9418)& (0.9529)& (0.9517)& (0.9524) \\
				50 & 0.7901 &1.1588 &0.5533 &0.6486 &0.8080 &0.3716\\
				& (0.9428)& (0.9419)& (0.9436)& (0.9537)& (0.9541)& (0.9538) \\
				60 & 0.7628 &0.3015 &0.5425 &0.6259 &0.8386 &0.03270\\
				& (0.9435)& (0.9450)& (0.9441)& (0.9522)& (0.9537)& (0.9515) \\
				70& 0.7443 &0.9798 &0.5283 &0.6411 &0.7595 &0.3166\\
				& (0.9424)& (0.9443)& (0.9449)& (0.9553)& (0.9540)& (0.9521) \\
				80 &0.6848 &0.8818 &0.4665 &0.6312 &0.7588 &0.3139\\
				& (0.9445)& (0.9437)& (0.9456)& (0.9539)& (0.9512)& (0.9496) \\
				90 &0.6621 &0.8065 &0.4317 &0.6257 &0.7613 &0.2987\\
				& (0.9428)& (0.9440)& (0.9425)& (0.9519)& (0.9507)& (0.9515) \\
				100 & 0.6368 &0.7683 &0.4019 &0.5967 &0.7438 &0.2788\\
				& (0.9433)& (0.9417)& (0.9445)& (0.9529)& (0.9513)& (0.9536) \\
				150 &0.4625 &0.5832 &0.3793 &0.1038 &0.0899 &0.2367\\
				& (0.9443)& (0.9429)& (0.9408)& (0.9531)& (0.9547)& (0.9512) \\
				\bottomrule
		\end{tabular}}
	\end{center}
	\vspace{-0.5cm}
\end{table}

\begin{table}[htbp!]
	\begin{center}
		\caption{ALs of $95\%$ asymptotic confidence intervals (ACI) and Bayesian credible intervals (BCI) and their corresponding CPs (in parentheses) with $\alpha=1$, $\beta=0.7$, and $\theta=0.4$ with different values of $n$.}
		\label{T6}
		\tabcolsep 4pt
		\small
		\scalebox{0.85}{
			\begin{tabular}{*{11}c*{10}{r@{}l}}
				\toprule
				\multicolumn{1}{c}{} &
				\multicolumn{3}{c}{ACI} & \multicolumn{3}{c}{BCI} \\
				\cmidrule(lr){2-4}\cmidrule(lr){5-7}   
				\multicolumn{1}{c}{$n$}& \multicolumn{1}{c}{$\alpha$} & \multicolumn{1}{c}{$\beta$} & \multicolumn{1}{c}{$\theta$}& \multicolumn{1}{c}{$\alpha$} & \multicolumn{1}{c}{$\beta$} & \multicolumn{1}{c}{$\theta$} \\
				\midrule
				20 &2.6780 &4.1893 &0.5645 &1.1012 &1.0216 &0.8413 \\
				& (0.9432)& (0.9417)& (0.9455)& (0.9526)& (0.9517)& (0.9529) \\
				30 &2.3077 &3.1644 &0.4515 &1.0359 &0.9473 &0.3928 \\
				& (0.9420)& (0.9423)& (0.9437)& (0.9536)& (0.9524)& (0.9515) \\
				40 &2.1733 &2.6883 &0.3796 &1.0886 &0.9107 &0.3059\\
				& (0.9427)& (0.9429)& (0.9433)& (0.9518)& (0.9525)& (0.9547) \\
				50 &2.0280 &2.4362 &0.3431 &0.9761 &0.9139 &0.2753\\
				& (0.9450)& (0.9439)& (0.9427)& (0.9538)& (0.9516)& (0.9534) \\
				60 &1.9420 &2.2549 &0.3087 &0.9644 &0.9401 &0.2284\\
				& (0.9416)& (0.9428)& (0.9436)& (0.9528)& (0.9529)& (0.9545) \\
				70 &1.7909 &2.0937 &0.2894 &0.9151 &0.8737 &0.2290\\
				& (0.9409)& (0.9434)& (0.9443)& (0.9547)& (0.9538)& (0.9512) \\
				80 &1.7050 &2.0370 &0.2686 &0.8812 &0.9474 &0.2188\\
				& (0.9426)& (0.9432)& (0.9445)& (0.9533)& (0.9504)& (0.9526) \\
				90 &1.6319 &1.9473 &0.2511 &0.8696 &0.8996 &0.2017\\
				& (0.9428)& (0.9439)& (0.9419)& (0.9533)& (0.9528)& (0.9538) \\
				100 &1.5480 &1.8356 &0.2378 &0.8772 &0.8790 &0.1814\\
				& (0.9416)& (0.9445)& (0.9461)& (0.9539)& (0.9528)& (0.9495) \\
				150 &1.2775 &1.6382 &0.1941 &0.7568 &0.8428 &0.1370 \\
				& (0.9425)& (0.9437)& (0.9448)& (0.9516)& (0.9525)& (0.9524) \\
				\bottomrule
		\end{tabular}}
	\end{center}
	\vspace{-0.5cm}
\end{table}

\section{Real data analysis}
Here, two real life data sets are analyzed to illustrate the applicability of the proposed distribution.  \\
% \begin{align}
% 	\nonumber f_{IW}(x;\alpha,\beta)=&~\alpha \beta x^{-\beta-1} e^{-\alpha x^{-\beta}};~~x>0,\alpha,\beta>0;\\
% 	\nonumber f_{GIE}(x;\alpha,\beta)=&~\frac{\alpha \beta}{x^2}  e^{-\frac{\beta}{x}} (1-e^{-\frac{\beta}{x}})^{\alpha-1};~~x>0,\alpha,\beta>0;\\
% 	\nonumber f_{IPL}(x;\alpha,\beta)=&~\frac{\alpha \beta^2}{1+\beta}\bigg(\frac{1+x^{\alpha}}{x^{2\alpha+1}}\bigg)e^{-\beta/x^{\alpha}};~~x>0,\alpha,\beta>0;\\
% 	\nonumber f_{IG}(x;\alpha,\beta)=&~ \frac{\alpha\beta}{x^2} e^{-\alpha\big(e^{\frac{\beta}{x}-1}+\frac{\beta}{x}\big)};~~x>0,\alpha,\beta>0;
% \end{align}
%and 
%\begin{align*}
%	f_{GIW}(x;\alpha,\beta,\theta)= \theta \beta \alpha^{\beta}x^{-(\beta+1)} e^{-\theta(\frac{\alpha}{x})^{\beta}};~~x>0,\alpha,\beta,\theta>0;
%\end{align*}
\subsection{Flood level data set}
A real life data set related to flood level has been considered to illustrate the applicability of IGLFR distribution. This data set collected by \cite{united1977guidelines} contains $39$ observations of the annual flood discharge rates ($ft^3/s$) of the Floyed river. Flood rates of rivers have a significant socio-economic, political impact and also has an important role in engineering phenomenon. The data set is given below:\\
--------------------------------------------------------------------------------------------------------------------------\\
1460,  4050,  3570,  2060,  1300,  1390,  1720,  6280,  1360,  7440,  5320,  1400,  3240, 2710,  4520,  4840,  8320, 13900, 71500,  6250,  2260,   318,  1330,   970,  1920, 15100, 2870, 20600,  3810,   726,  7500,  7170,  2000,   829, 17300,  4740, 13400,  1940,  5660.\\
--------------------------------------------------------------------------------------------------------------------------\\

\begin{table}[htbp!]
	\begin{center}
		\caption{Goodness-of-fit test for the proposed IGLFR with some other distributions (Dist) models for flood level data.}
		\label{T7}
		\tabcolsep 4pt
		\small
		\scalebox{1}{
			\begin{tabular}{*{11}c*{10}{r@{}l}}
				\\
				\toprule
				\multicolumn{1}{c}{} &
				\multicolumn{3}{c}{Estimates (SE)} & \multicolumn{2}{c}{} \\
				\cmidrule(lr){2-4}  
				\multicolumn{1}{c}{Dist}& \multicolumn{1}{c}{$\alpha$} & \multicolumn{1}{c}{$\beta$} & \multicolumn{1}{c}{$\theta$}& \multicolumn{1}{c}{K-S}& \multicolumn{1}{c}{$p$-value} \\
				\midrule
				IGLFR& 2377.2233& 2.2279& 1.1717& {\bf 0.0851}& {\bf 0.9173} \\
				GIE& 1.1716& 2377.6415& & 0.0876& 0.8999 \\
				GIW& 9.6911& 1.0181& 242.0750&  0.0892& 0.8884 \\
				IW& 2441.9167& 1.0180& & 0.0893& 0.8876 \\
				IPL& 1.0181& 2444.1216& & 0.0894& 0.8871 \\
				IG& 210.5542& 10.1148& & 0.0905& 0.8718 \\
				\bottomrule
			\end{tabular}}
		\end{center}
	\vspace{-0.5cm}
\end{table}
To fit this data set with the proposed IGLFR distribution, we compare it with some well-known inverse distributions such as inverse Weibull (IW), generalized inverse exponential (GIE), inverse power Lindley (IPL), inverse Gompertz (IG) and generalized inverse Weibull (GIW) distributions. The goodness-of-fit test has been compared of the above mentioned distribution models by using Kolmogorov-Smirnov (K-S) distance and corresponding $p$-value. Table $\ref{T7}$ represents the MLEs of the parameters and the corresponding K-S distance and associated $p$-values of the competing model distributions. From Table $\ref{T7}$, it has been observed that the smallest K-S distance and the largest $p$-value has been computed for IGLFR distribution. Figure $6$ contains the empirical CDF (ECDF) plot, probability-probability (P-P) plot, quantile-quantile (Q-Q) plot, boxplot, histogram with density plot and TTT plot for IGLFR distribution under the given real life data. This shows that our proposed IGLFR distribution is a better fit to the given flood level data than the other above mentioned inverse distributions.  The point and interval estimates of the model parameters of IGLFR distribution for the flood level data have been tabulated in Table $\ref{T8}$. From Table $\ref{T8}$, it has been observed that BCIs perform better than ACIs in terms of interval length.   
\begin{table}[htbp!]
	\begin{center}
		\caption{Point and interval estimates of the parameters of IGLFR distribution for flood level data.}
		\label{T8}
		\tabcolsep 4pt
		\small
		\scalebox{1}{
			\begin{tabular}{*{11}c*{10}{r@{}l}}
				\\
				\toprule
				\multicolumn{1}{c}{Parameter}& \multicolumn{1}{c}{MLE} & \multicolumn{1}{c}{ACI} & \multicolumn{1}{c}{Bayes}& \multicolumn{1}{c}{BCI}\\
				\midrule
				$\alpha$& 2377.2233& (1421.6925, 3332.7552)& 2418.4050& (2373.3980, 2461.5000) \\
				$\beta$& 2.2279& (0.4887, 3.9670)& 1.9793& (1.5386, 2.4238) \\
				$\theta$& 1.1717& (0.6660, 1.6775)& 1.2513& (0.8811, 1.6287) \\
			\bottomrule
			\end{tabular}}
		\end{center}
	\vspace{-0.5cm}
\end{table}

	\begin{figure}[htbp!]
	\subfigure[]{\includegraphics[height=1.5in,width= 2.15 in]{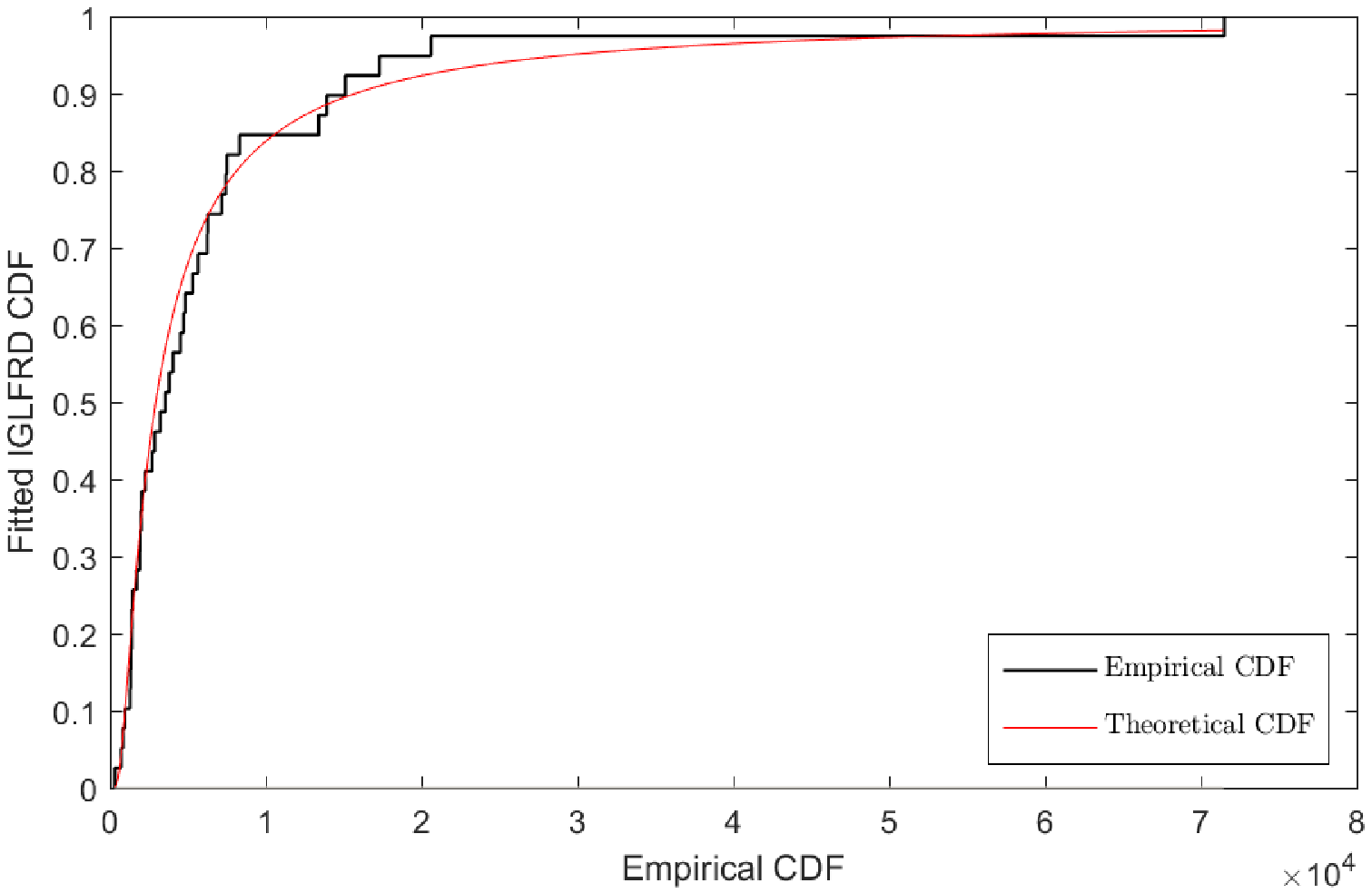}}
	\subfigure[]{\includegraphics[height=1.5in,width=2.15 in]{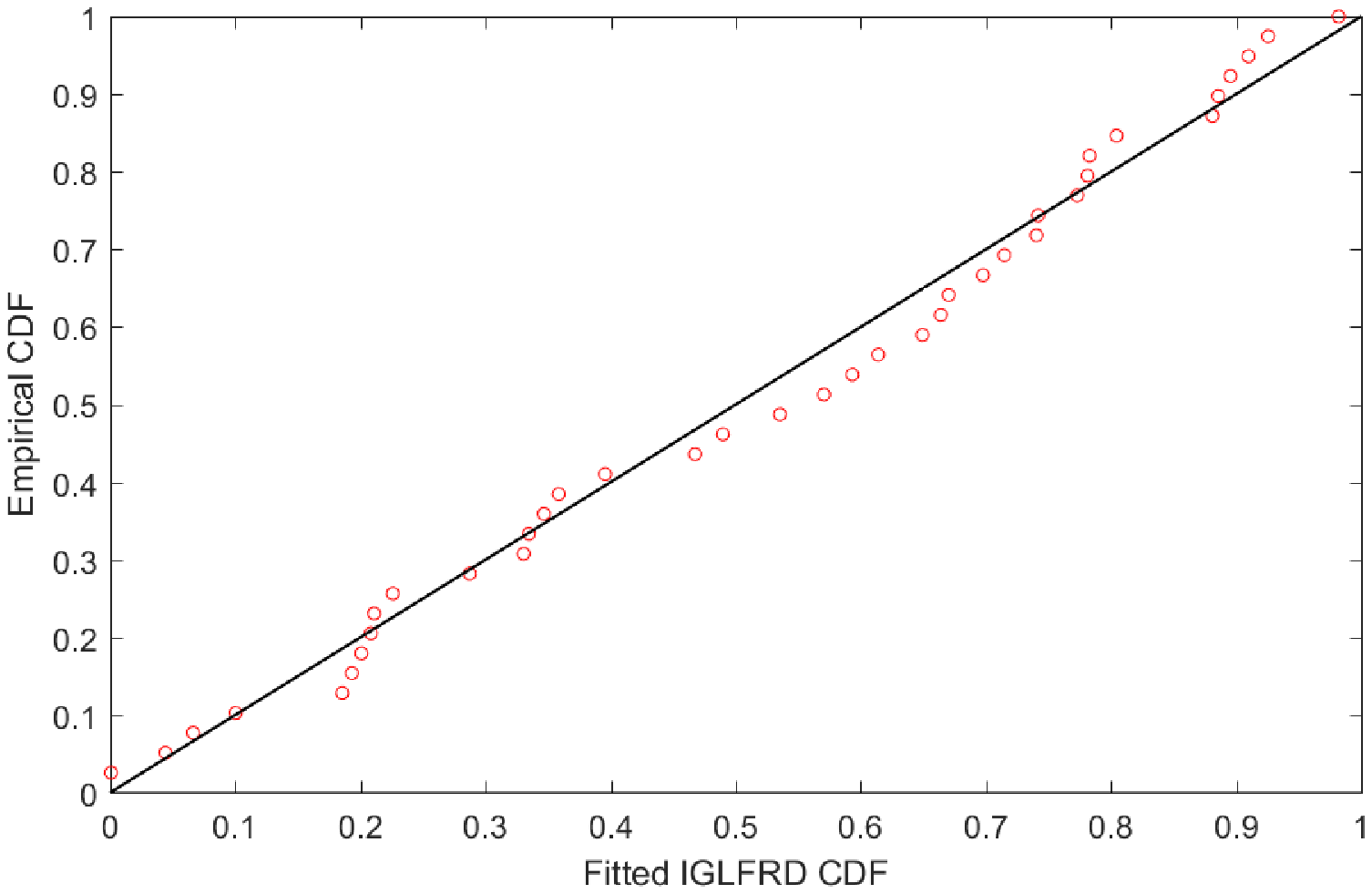}}
	\subfigure[]{\includegraphics[height=1.5in, width=2.15 in]{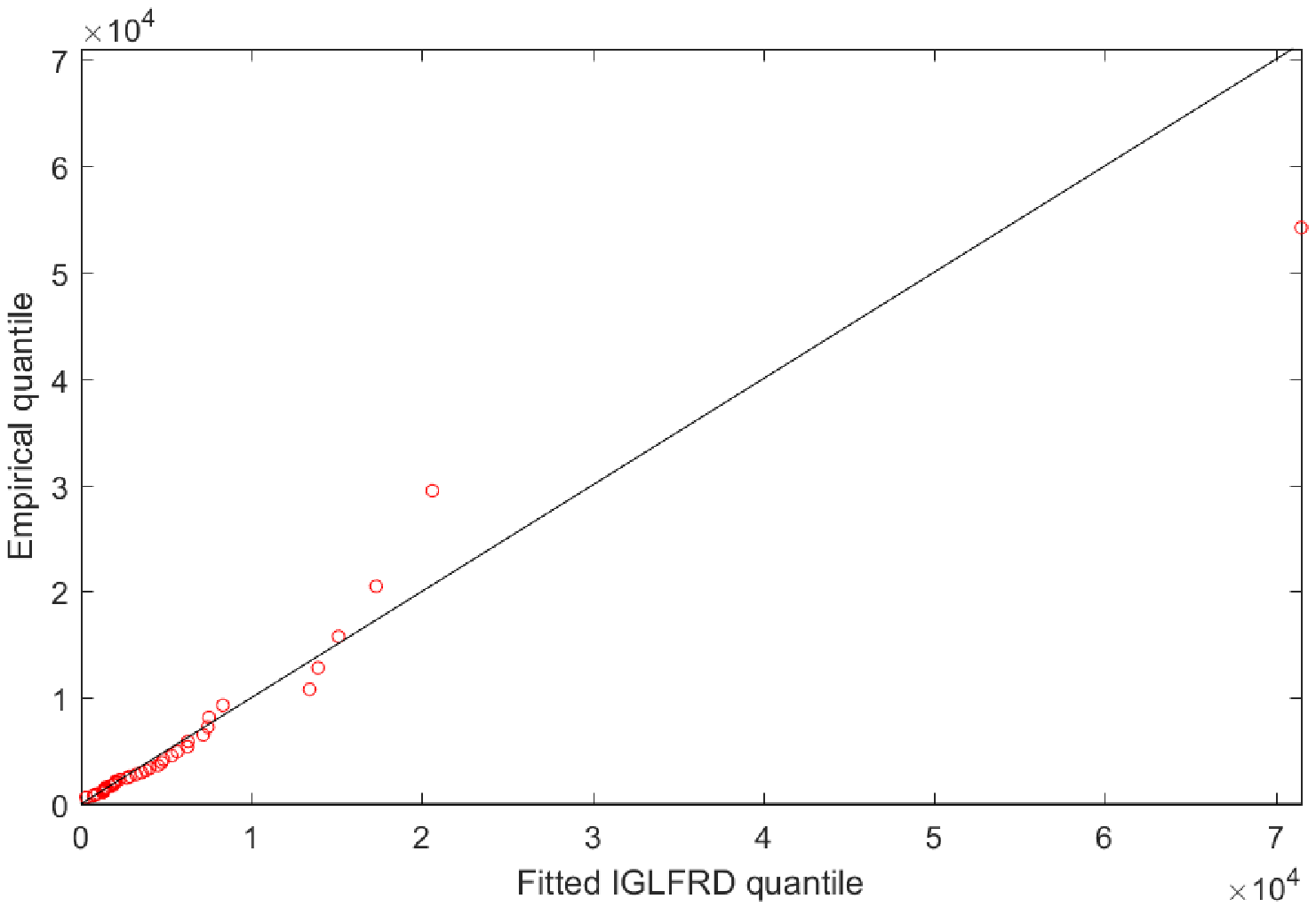}}
    \subfigure[]{\includegraphics[height=1.5in, width=2.15 in]{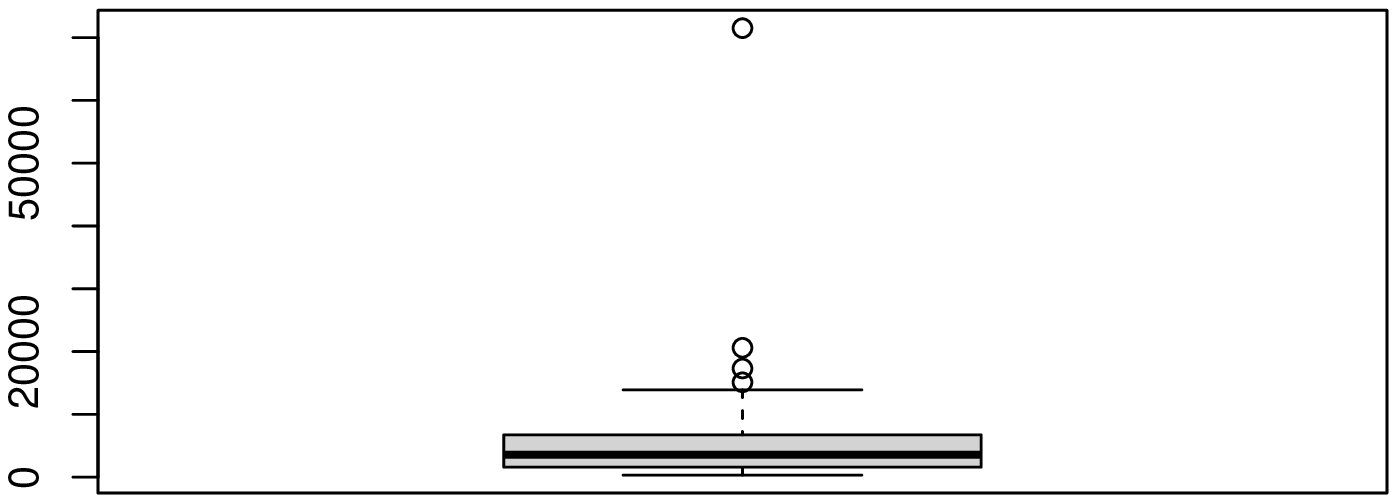}}
    \subfigure[]{\includegraphics[height=1.5in, width=2.15 in]{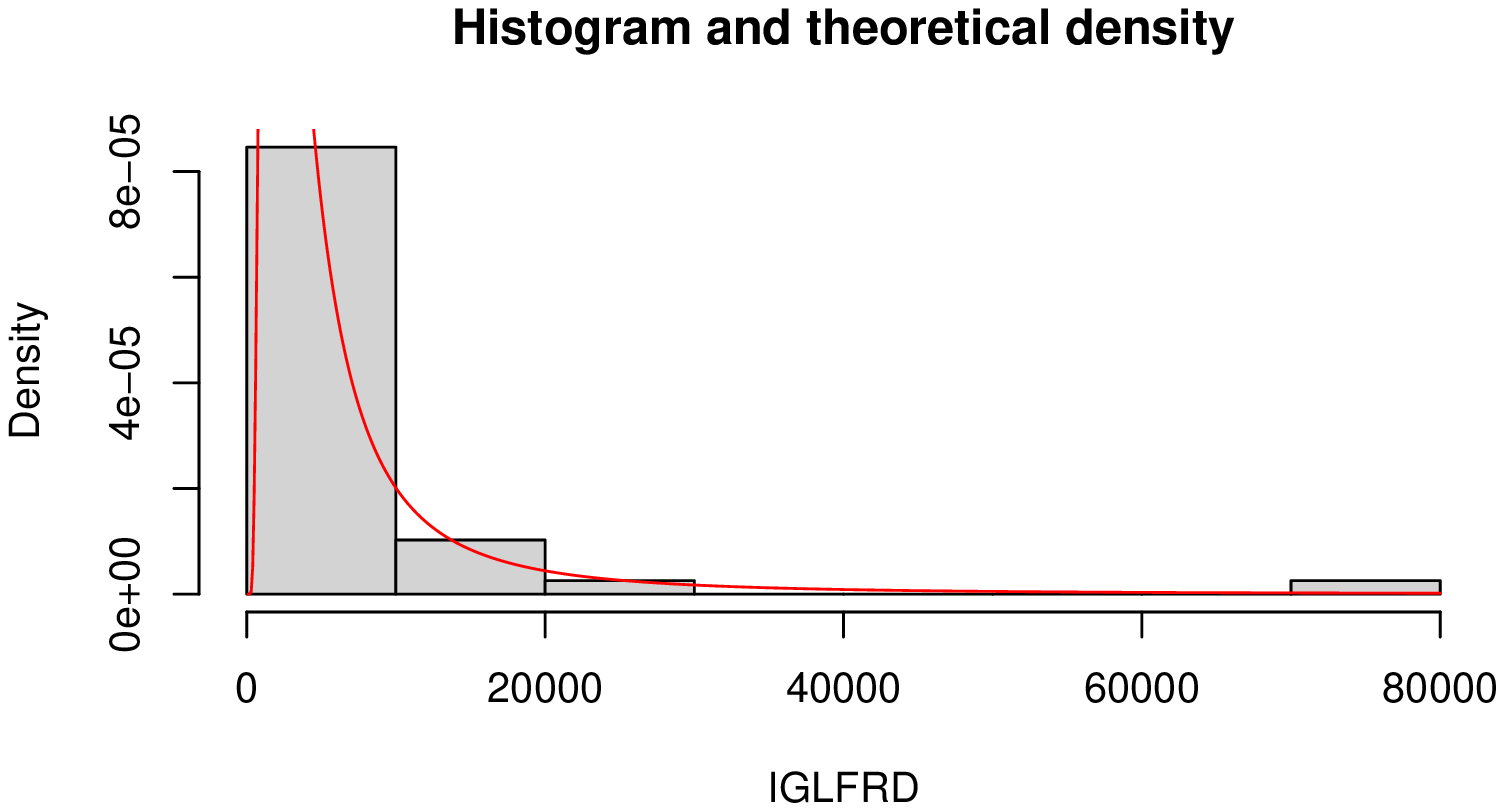}}
    \subfigure[]{\includegraphics[height=1.5in, width=2.15 in]{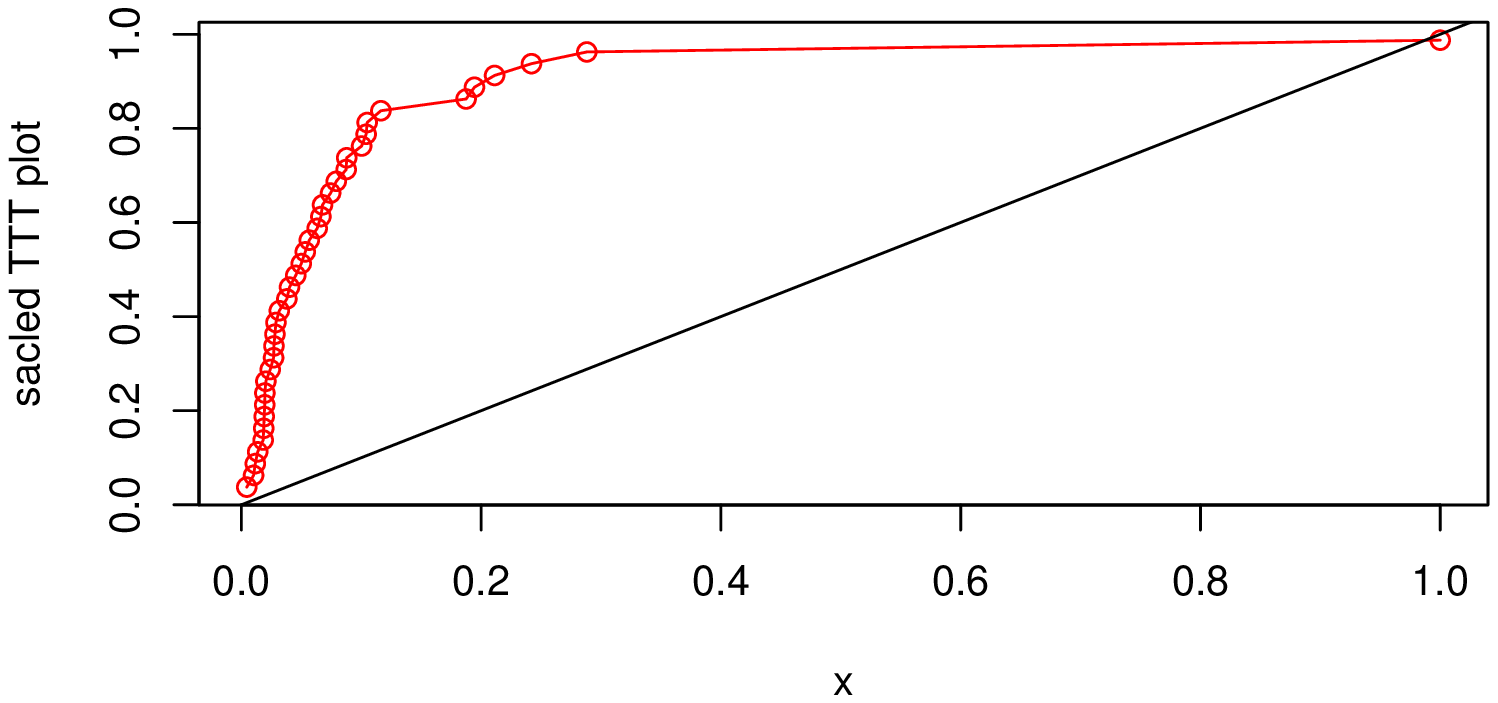}}
	\label{c1}
	\caption{(a) Empirical CDF, (b) P-P plot, (c) Q-Q plot, (d) boxplot, (e) histogram with theoretical density and (f) TTT plot for the IGLFR distribution under the flood data.}
\end{figure}

\subsection{Covid-19 data set}
A Covid-19 data set from \cite{liu2021modeling} is considered to check the applicability of the proposed IGLFR distribution model. This data set contains mortality rates of Covid-19 patients of Canada which is given by\\ 
---------------------------------------------------------------------------------------------------------------------------
1.5157, 1.5806, 1.9048, 2.1901, 2.4141, 2.4946, 2.5261, 2.6029, 2.7704, 2.7957, 2.8349, 2.8636, 2.9078, 3.0914, 3.1091, 3.1091, 3.1444, 3.1348, 3.2110, 3.2135, 3.2218, 3.2823, 3.3592, 3.3769, 3.3825, 3.5146, 3.6346, 3.6426, 3.8594, 4.0480, 4.1685, 4.2202, 4.2781, 4.9274, 4.9378, 6.8686.\\
---------------------------------------------------------------------------------------------------------------------------
\begin{table}[htbp!]
	\begin{center}
		\caption{Goodness-of-fit test for the proposed IGLFR distribution with some other distributions (Dist) models for Covid-19 data.}
		\label{T9}
		\tabcolsep 4pt
		\small
		\scalebox{1}{
			\begin{tabular}{*{11}c*{10}{r@{}l}}
				\\
				\toprule
				\multicolumn{1}{c}{} &
				\multicolumn{3}{c}{Estimates} & \multicolumn{2}{c}{} \\
				\cmidrule(lr){2-4}  
				\multicolumn{1}{c}{Dist}& \multicolumn{1}{c}{$\alpha$} & \multicolumn{1}{c}{$\beta$} & \multicolumn{1}{c}{$\theta$}& \multicolumn{1}{c}{K-S}& \multicolumn{1}{c}{$p$-value} \\
				\midrule
				IGLFR& 11.7507& 1.7358& 30.6509& {\bf 0.1073}& {\bf 0.7614} \\
				GLFR& 0.6272& 0.1843& 12.4830& 0.1081& 0.7544 \\
				IW& 23.4053& 3.1691& & 0.1737& 0.2023 \\
				GIW& 1.7154& 3.1692& 4.2315&  0.1738& 0.2018 \\
				GIR& 1.2174& 2.6005& ~~& 0.2789& 0.0057 \\
				IG&  0.0908& 6.5107& & 0.2006& 0.0961 \\
				\bottomrule
		\end{tabular}}
	\end{center}
	\vspace{-0.5cm}
\end{table}

To fit this data set with the proposed IGLFR distribution, we compare it with some well-known inverse distributions such as generalized linear failure rate (GLFR), inverse Weibull (IW), generalized inverse Rayleigh (GIR), inverse Gompertz (IG) and generalized inverse Weibull (GIW) distributions. The goodness-of-fit test has been compared of the above mentioned distribution models by using Kolmogorov-Smirnov (K-S) distance and corresponding $p$-value. Table $\ref{T9}$ represents the MLEs of the parameters and the corresponding K-S distance and associated $p$-values of the competing model distributions. From Table $\ref{T9}$, the smallest K-S distance and the largest $p$-value yields that IGLFR distribution fits the given covid-19 data better than the other mentioned distributions.  Figure $7$ contains the empirical CDF (ECDF) plot, probability-probability (P-P) plot, quantile-quantile (Q-Q) plot, boxplot, histogram with density plot and TTT plot for IGLFR distribution under the given covid-19 data. This shows that our proposed IGLFR distribution is a better fit to the covid-19 data than the other above mentioned inverse distributions. The point and interval estimates of the model parameters of IGLFR distribution for the covid-19 data have been tabulated in Table $\ref{T10}$. From Table $\ref{T10}$, it has been observed that BCIs perform better than ACIs in terms of interval length. 
	\begin{figure}[htbp!]
	\subfigure[]{\includegraphics[height=1.5in,width= 2.15 in]{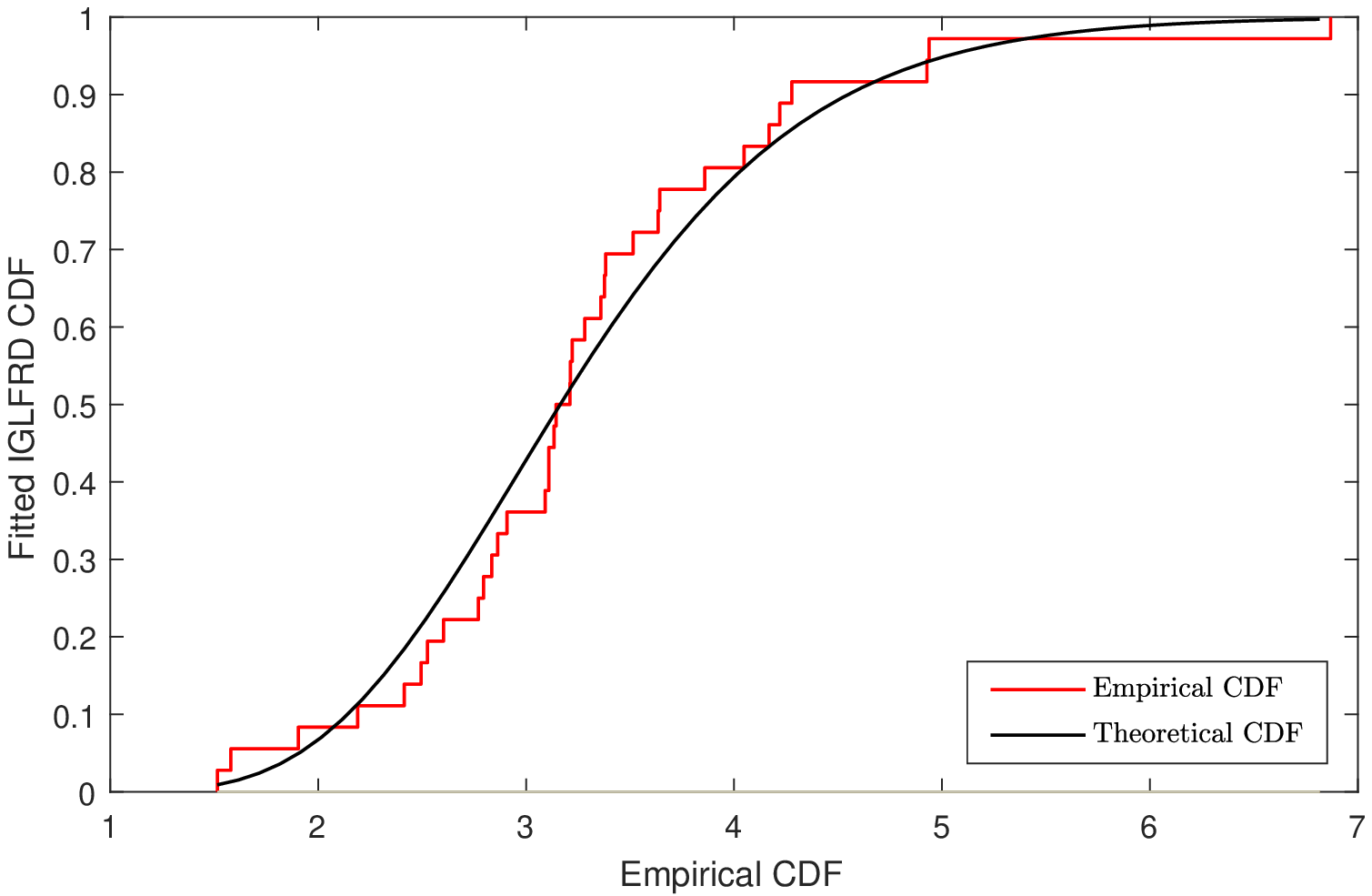}}
	\subfigure[]{\includegraphics[height=1.5in,width=2.15 in]{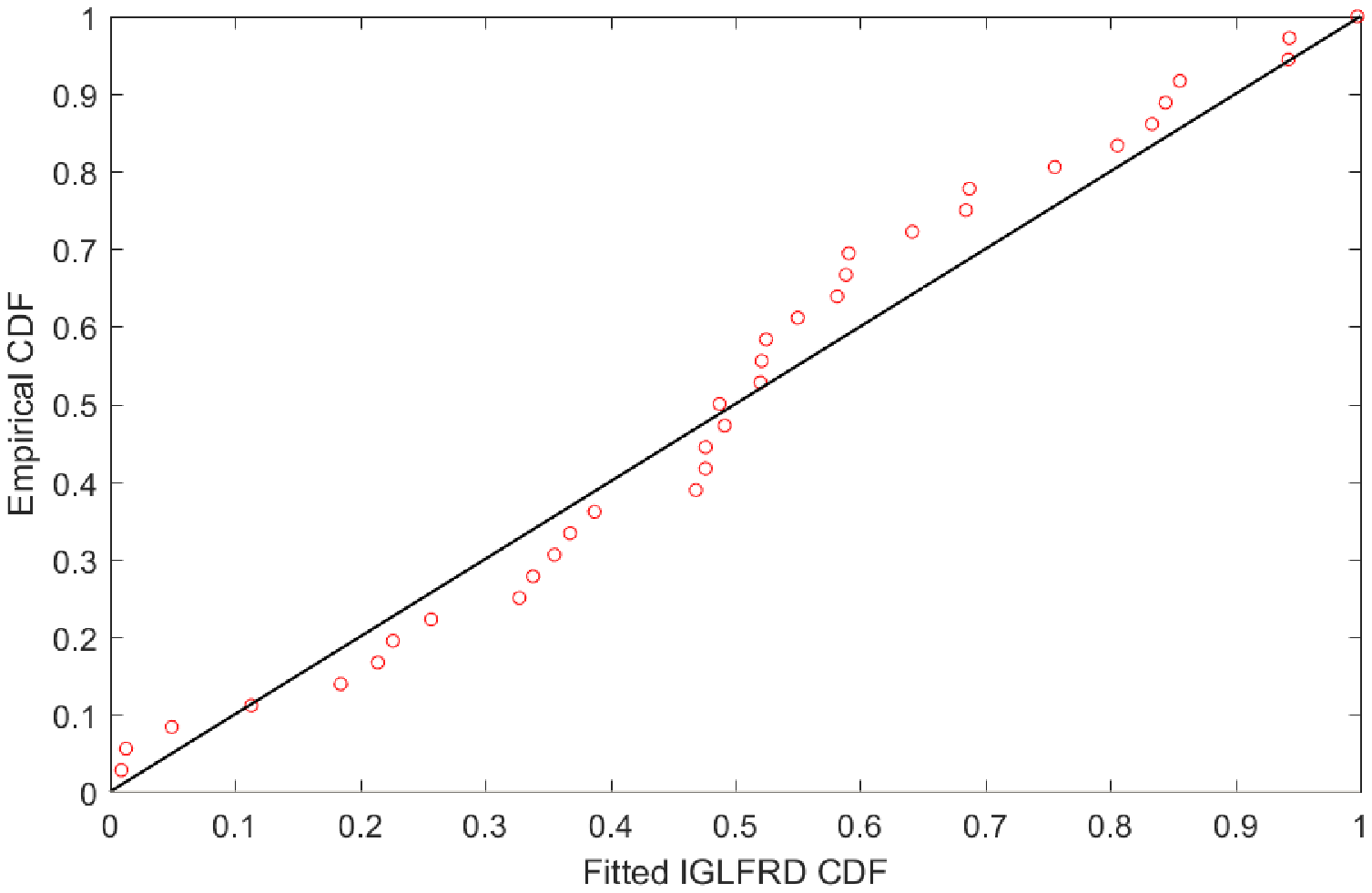}}
	\subfigure[]{\includegraphics[height=1.5in, width=2.15 in]{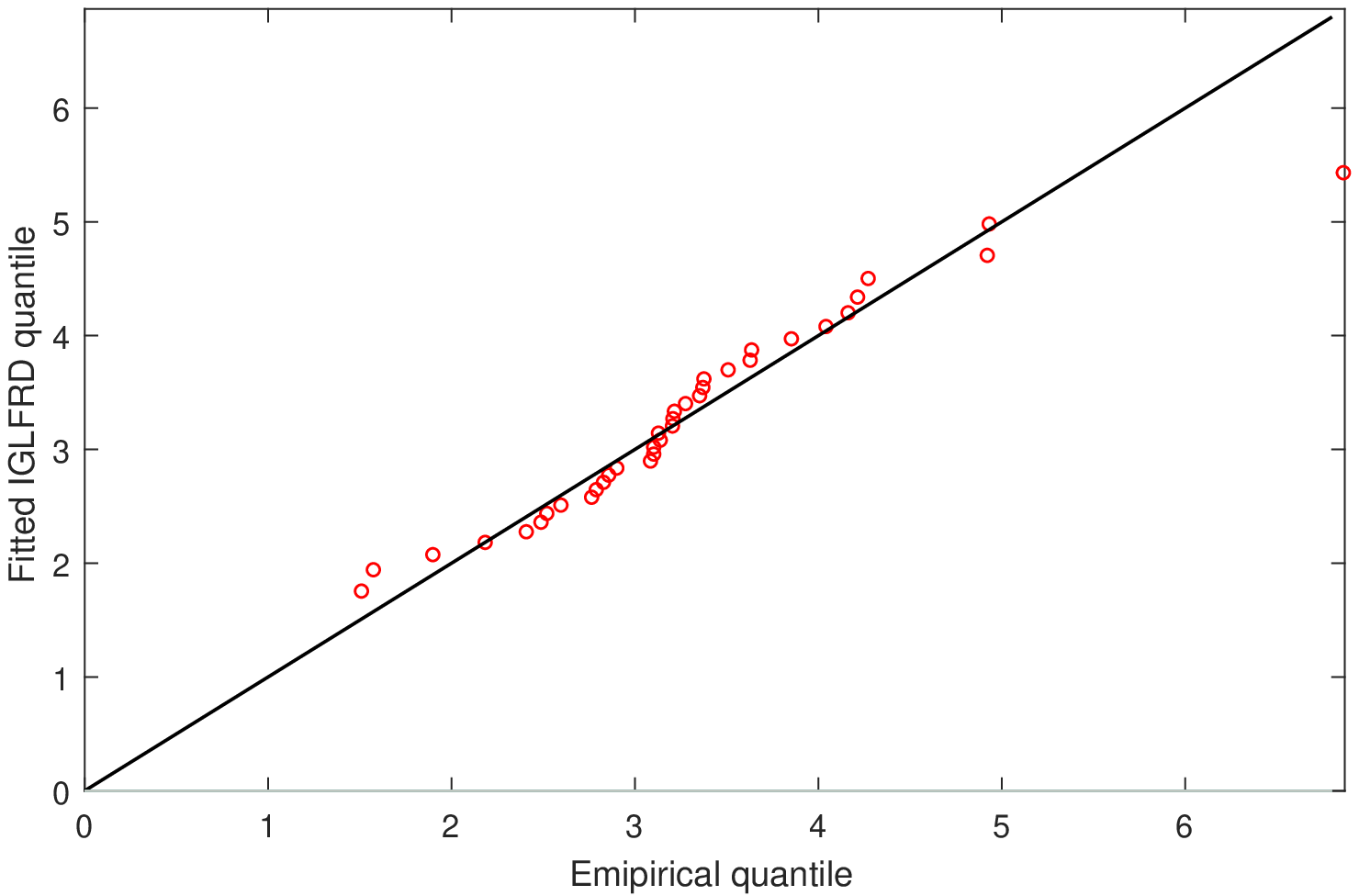}}
	\subfigure[]{\includegraphics[height=1.5in, width=2.15 in]{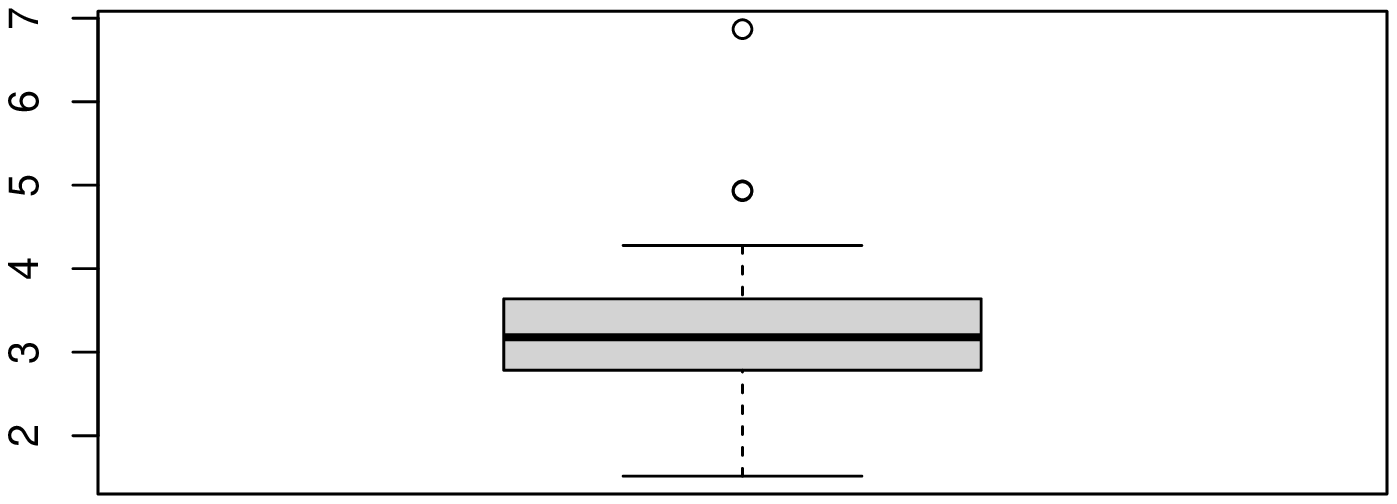}}
	\subfigure[]{\includegraphics[height=1.5in, width=2.15 in]{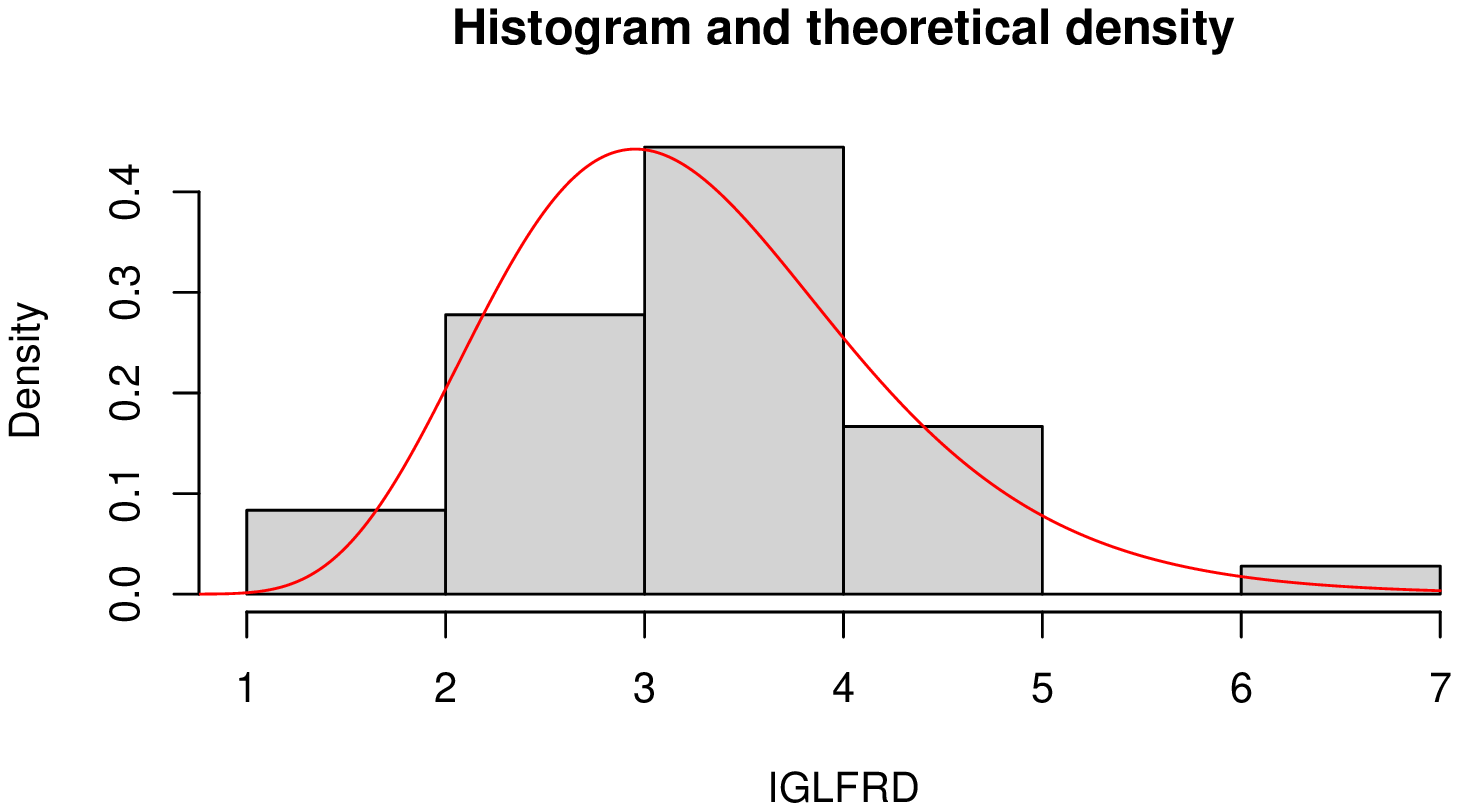}}
	\subfigure[]{\includegraphics[height=1.5in, width=2.15 in]{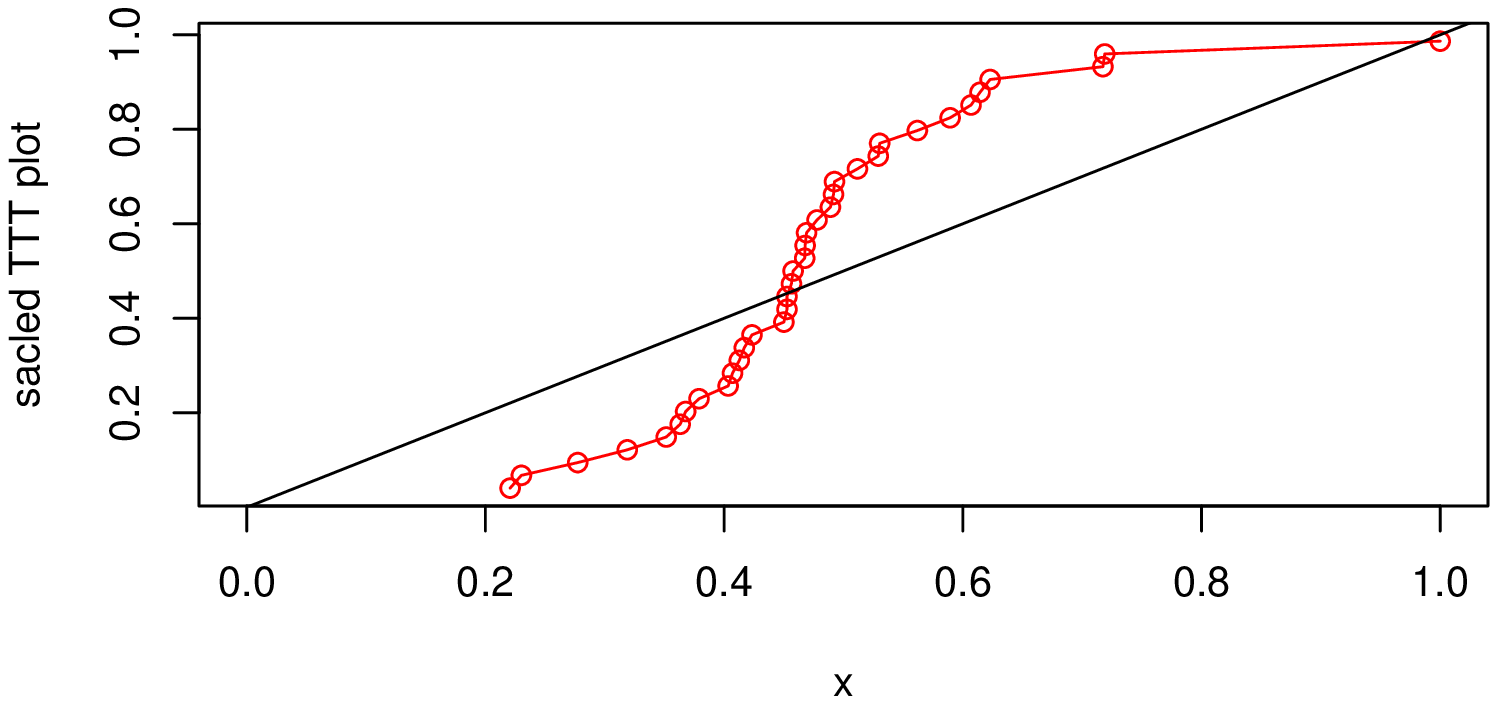}}
	\label{c2}
	\caption{(a) Empirical CDF, (b) P-P, (c) Q-Q plot, (d) boxplot, (e) histogram with theoretical density and (f) TTT plotfor the IGLFR distribution under the covid-19 data.}
\end{figure}

\begin{table}[htbp!]
	\begin{center}
		\caption{Point and interval estimates of the parameters of IGLFR distribution for flood level data.}
		\label{T10}
		\tabcolsep 4pt
		\small
		\scalebox{1}{
			\begin{tabular}{*{11}c*{10}{r@{}l}}
				\\
				\toprule
				\multicolumn{1}{c}{Parameter}& \multicolumn{1}{c}{MLE} & \multicolumn{1}{c}{ACI} & \multicolumn{1}{c}{Bayes}& \multicolumn{1}{c}{BCI}\\
				\midrule
				$\alpha$& 11.7507& (0.1104, 23.3909)& 10.9631& (9.9737, 12.0084) \\
				$\beta$& 1.7315& (0.1053, 3.2439)& 1.3119& (0.7948, 1.8901) \\
				$\theta$& 30.6446& (0.8895, 68.4806)& 24.9705& (20.8091, 30.0026) \\
				\bottomrule
		\end{tabular}}
	\end{center}
	\vspace{-0.5cm}
\end{table}

\section{Conclusions}
In this communication, a new statistical model, called as IGLFR distribution has been introduced along with its  statistical properties. Point as well as interval estimates of the parameters have been obtained. We have obtained MLEs, MPS and Bayes estimates. For Bayes estimates, SELL is used. A detailed simulation study has been performed using various packages in $R$ software to see the performance of the proposed estimates.   From the simulated tables, it has been noticed that Bayesian estimates are more efficient than classical estimates based on average biaes and MSEs for point estimates and ALs and CPs for interval estimates. Finally, two data sets are considered and analyzed. 
\\
\\
\textbf{ Disclosure Statement:} Both the authors state that they do not have any conflict of interest.

\bibliography{sk}
\end{document}